\newtheorem{de}{Definition}
\newtheorem{ex}{Example}
\newtheorem{lemma}{Lemma}
\newtheorem{theorem}{Theorem}
\newtheorem*{theorem*}{Theorem}
\newtheorem{remark}{Remark}
\DeclareMathOperator{\tr}{Trace}
\begin{document}
\title{Bipartite separability and non-local quantum operations on graphs}
\author{Supriyo Dutta}
\affiliation{Department of Mathematics, Indian Institute of Technology Jodhpur, Jodhpur 342011, India}
\email{dutta.1@iitj.ac.in}
\author{Bibhas Adhikari}
\affiliation{Department of Mathematics, Indian Institute of Technology Kharagpur, Kharagpur 721302, India}
\email{bibhas@maths.iitkgp.ernet.in}
\author{Subhashish Banerjee}
\affiliation{Department of Physics Indian Institute of Technology Jodhpur, Jodhpur 342011, India}
\email{subhashish@iitj.ac.in}
\author{R. Srikanth}
\affiliation{Poornaprajna Institute of Scientific Research, Bangalore, Karnataka 560080, India}
\email{srik@poornaprajna.org}

\begin{abstract}
\noindent 
\textbf{Abstract}
In this paper, we consider the separability problem for bipartite quantum states arising from graphs. Earlier it was proved that the degree criterion is the graph theoretical counterpart of the familiar PPT criterion for separability, although there are entangled states with positive partial transpose for which degree criterion fails. Here, we introduce the concept of partially symmetric graphs and degree symmetric graphs by using the well-known concept of partial transposition of a graph and degree criteria, respectively. Thus, we provide classes of bipartite separable states of dimension $m \times n$ arising from partially symmetric graphs. We identify partially asymmetric graphs which lack the property of partial symmetry. Finally we develop a combinatorial procedure to create a partially asymmetric graph from a given partially symmetric graph. We show that this combinatorial operation can act as an entanglement generator for mixed states arising from partially symmetric graphs.

\keywords{Simple Graph, Combinatorial  and Signless Laplacian, Density
  matrix,   Entanglement}  \pacs{Mathematics   Subject  Classification
  (2010) : 05C50, 81P68.}
\end{abstract}

\maketitle

\section{Introduction}

Graph   theory   \cite{west2001,  bapat2010}   is   a
  well-established branch of mathematics. It forms the core of complex
systems \cite{northrop2010introduction, chung2006complex},  widely used in  Economics, Social
  Science  and  System  Biology   \cite{lu2011link},  as  well  as  in
  communication and  information \cite{han2012graph}. It is  also used
  to address foundational aspects of different branches of mathematics
  and physics \cite{bianconi2015interdisciplinary}. To the best of our
  knowledge, combinatorial graphs have  been used in quantum mechanics
  and information  theory \cite{nielsen2010quantum} in  four different
  ways: 
  (a). {\bf  Quantum Graphs:}  Here, a differential or pseudo-differential operator is associated with a graph. The operator acts on functions defined on each edge of the graph when the edges are equipped with compact real intervals. \cite{berkolaiko2013introduction, berkolaiko2006quantum}. 
  (b). {\bf Graph states:} In this approach, combinatorial graphs are used to
describe interactions   between  different  quantum   states  \cite{hein2004multiparty,
    anders2006fast,  benjamin2006brokered}. Here  the vertices  of the
  graph  represent  the  quantum mechanical  states,
  while the  interactions between them  are represented by  the edges.
  Graph states  were proposed as  a generalization of  cluster states,
  which  is   the  entanglement  resource  used   in  one-way  quantum
  computation.
(c). \textbf{Combinatorial approach  to LOCC  (local operations
  and classical communication) transformations in multipartite quantum
  states:} Here graph  theoretic methods were applied  to the analysis
of pure maximally entangled  quantum states distributed among multiple
geographically separated parties \cite{SPS+05,PSS06}.
(d). {\bf Braunstein, Ghosh and Severini  approach:} Here, a single quantum
state    is    represented    by   a    graph    \cite{braunstein2006,
  adhikari2012}.  Combinatorial  properties  of a  quantum  mechanical
state can be studied using this approach.

This  work is  in the  spirit of  the Braunstein  et.  al.   approach.
Representing a quantum state by a  graph is beneficial for research in
both quantum information  theory as well as  complex networks.  Graphs
provide   a  platform   to   visualize   quantum  states   pictorially
\cite{ionicioiu2012encoding},   such   that  different   states   have
different pictographic        representations
\cite{adhikari2012} and some important  unitary evolutions can also be
represented by changes in  their representations \cite{dutta2016}.  In
this  way,   graphs  form  an intuitively  appealing
  framework for quantum information  and communication.  On the other
hand, measuring entropy and complexity of large, complex networks is a
challenging part  of network science. Correspondence between graphs
and  quantum  states  provide  an insightful  connection  between  the
Shannon and von-Neumann entropy on the one hand and, on the other, the
complexity   of   networks  \cite{du2010note,   passerini2008neumann},
details   of   which   can   be   seen   from   \cite{zhao2011entropy,
  anand2009entropy, anand2011shannon, maletic2012combinatorial}.  This
interconnection has also been exploited in quantum gravity and quantum
spin   networks   \cite{rovelli2010single}.

A combinatorial  graph  $G =  (V(G),  E(G))$ is  an
  ordered pair of  sets $V(G)$ and $E(G),$ where $V(G)$  is called the
  vertex set and  $E(G)\subseteq V(G)\times V(G)$ is the  edge set. In
  this paper,  we are concerned  with simple graphs, which  are graphs
  without multiple edges and loops. Between any two vertices there is
  a  maximum  of one  edge.  There  is no  edge  linking  a vertex  to
  itself. An edge is denoted by  $(i, j)$ which links the vertices $i$
  and $j$. The  adjacency matrix $A(G) = (a_{ij})$  associated with a
  simple  graph $G$  is a  binary (all  elements are  $0,1$) symmetric
  matrix defined as
$$a_{ij} = \begin{cases}
1 & \text{if}~ (i,j) \in E(G),\\
0 & \mbox{otherwise}.
\end{cases}$$
Thus,  the order  of $A(G)$  is $|V(G)|$  where, $|V(G)|$  denotes the
number of elements  of the vertex set, $V(G)$. The  degree of a vertex
$u$ is  the number of edges  incident to it, denoted  by $d_G(u)$. The
degree  matrix  $D(G)$  of  $G$   is  the  diagonal  matrix  of  order
$|V(G)|$. Its $i$-th diagonal entry is  the degree of the $i$th vertex
of $G$, $i =  1, 2, \dots, |V|$. Two simple graphs  $G_1$ and $G_2$ are
isomorphic if  there exists  a bijective map  $f :  V(G_1) \rightarrow
V(G_2)$, such that $(i,j) \in E(G_1)$  if and only if $(f(i), f(j))\in
E(G_2)$. When  $G_1$ and $G_2$  are isomorphic there is  a permutation
matrix $P$ such that $A(G_1) = P^TA(G_2)P$.

In  quanutm mechanics  a density  matrix $\rho$  is a
  positive    semidefinite    Hermitian unit-trace
  matrix. Familiar  positive semidefinite matrices related  to a graph
  are  the  combinatorial  Laplacian  matrix  $L(G)  =  D(G)  -  A(G)$
  \cite{bapat2010}, the signless Laplacian matrix $Q(G) = D(G) + A(G)$
  \cite{cvetkovic}   and  the   normalised  Laplacian   matrix  $M(G)$
  \cite{banerjee2007spectrum,  wu2016graphs}.  In  this work,  we  are
  concerned  with the  density  matrices corresponding  to $L(G)$  and
  $Q(G)$ only. They are defined as \cite{adhikari2012}
$$\rho_l(G) = \frac{L(G)}{\tr(L(G))} ~\mbox{and}~ \rho_q(G) = \frac{Q(G)}{\tr(Q(G))}.$$
For any two isomorphic graphs $G_1$ and $G_2$,
$$L(G_1) = P^TL(G_2)P \hspace{.5cm} ~\mbox{and}~ \hspace{.5cm} Q(G_1) = P^TQ(G_2)P.$$
$$\Rightarrow    \rho_l(G_1)     =    P^T\rho_l(G_2)P    \hspace{.5cm}
~\mbox{and}~ \hspace{.5cm} \rho_q(G_1) = P^T\rho_q(G_2)P.$$ Throughout
this paper we shall denote a  general density matrix by  $\rho$, while
$\rho_l(G)$ and  $\rho_q(G)$ are specific density  matrices as defined
above, collectively written as $\rho(G)$.

Here,  we  are  concerned  with  bipartite
  systems distributed  between two parties  $A$ and $B$. It is well known that a state of such a system, represented by the density matrix $\rho$, is
  separable if and only if  it can be represented as a
    convex combination of product states, i.e., there are two sets of
  density  matrices  $\{\rho_k^{(A)} :  ~\mbox{order}(\rho_k^{(A)})  =
  m\}$   and  $\{\rho_k^{(B)}:   ~\mbox{order}(\rho_k^{(B)})  =   n\}$
  corresponding to $A$ and $B$ respectively, such that,
$$\rho = \sum_k p_k \rho_k^{(A)} \otimes \rho_k^{(B)}; \sum_k p_k = 1,
  ~ p_k \ge  0.$$ Here and below, $\otimes$ denotes  tensor product of
  matrices \cite{hom}.   Trivially, the  dimension of $\rho$  is $mn$. The state corresponding to
  $\rho$   is    called   entangled    if   it   is    not   separable
  \cite{horodecki2009quantum}.   If $k  =  1$ in  the above  equation,
  $\rho$ is called a pure  state.  Else, it is a mixed state which is  a probabilistic  mixture of
  different pure  states. Detection of entangled  states, known as
the ``quantum separability problem'' (QSP),  is one of the fundamental
problems of the quantum information theory \cite{guhne2009} due to its
wide  applications in  various quantum  communication and  information
processing  tasks. The Peres-Horodecki  criterion
  \cite{peres1996,horodecki1997,mcmahon2007quantum}, also known as the
  positive  partial transpose  (PPT) criterion,  provides a  necessary
  condition for separability.  It also provides a sufficient condition
  for  systems  of  dimension  $2\times2$  and  $2\times3$.   However,
  sufficiency for higher dimensional systems requires in general other
  techniques,  like  entanglement  witness.  As  $\rho$  is  a
  matrix of  order $mn$, it  can be written as  an $m \times  m$ block
  matrix with each block of size  $n \times n$.  The partial transpose
  corresponding to $B$, denoted by $\rho^{T_B}$, is obtained by taking
  individual transpose  of each block  \cite{mcmahon2007quantum}.  The
  PPT criterion states that for any separable state, $\rho^{T_B}$ is a
  positive  semi-definite   matrix  \cite{peres1996}.    However,  the
  converse is true only for bipartite systems of dimensions $2\times 2$
  and $2\times 3$ \cite{horodecki1997}.  There  are a number of other
separability critera \cite{horodecki2009quantum}.

The graph  theoretic approach to  solving QSP  has generated a  lot of
interest   in    the   last    decade   after   the    seminal   paper
\cite{braunstein2006}. This approach is beneficial as
  it is more  efficient for mixed states. The state  $\rho(G)$ is pure
  if  it   consists  of   a  single  edge;   otherwise  it   is  mixed
  \cite{braunstein2006, adhikari2012}.  The  separability of bipartite
  quantum  states  corrsponding to  random  graphs  are considered  in
  \cite{garnerone2012bipartite}.    Some   families  of   graphs
    were invented for  which separability
  can  be  tested  easily   \cite{braunstein2006some}.   The  idea  of
  entangled edges  \cite[Section 4.3]{braunstein2006}  was generalised
  in   \cite{rahiminia2008separability}. Motivated  by   the   PPT
criteria,  the  QSP   problem  for $\rho_l(G)$ was
considered  in  \cite{severini2008},  where, the  concept  of  partial
transpose   was   introduced  graph theoretically.
It introduced the degree criteria as  the condition
for separability.  However, the degree criteria failed to detect bound
entangled          states, that          is,
entangled  states  with positive  partial  transpose.
Thus,  finding  sufficient  conditions  on graphs  that  can  generate
separable states is  a current topic of interest in  the literature. A
class of  graphs which produce  $2 \times p$ separable  quantum states
were   identified   in  \cite{wu2006}. The   degree
  criterion    was    generalised    for    tripartite    states    in
  \cite{wang2007tripartite}.        In      \cite{xie2013separability,
    wu2009multipartite,  wu2010graphs}  QSP   for  higher  dimensional
  states  were addressed.  For some  particular class  of graphs,  the
  properties  of  corresponding  quantum   states  were  discussed  in
  \cite{hui2013separability,  li2015quantum}.   An  interesting  fact,
  already  discussed   in  the  literature  regarding   QSP,  is  that
  separability of $\rho(G)$ does not  depend on graph isomorphism. Two
  isomorphic graphs  may correspond  to quantum states  with different
  separability      properties      \cite{braunstein2006,severini2008,
    wu2010graphs}.   This  is  contradictory to  our  classical  world
  phenomena,  wherein  any  two  isomorphic graphs  possess  the  same
  properties.

In  \cite{severini2008},  the  degree  criterion  was
  shown  to be  equivalent to  the PPT  criterion.  Hence,  a stronger
  criterion for separability than  the degree criterion is essential.
Inspired  by the  degree criteria,  in  this paper,  we define  degree
symmetric  graphs. The  motivation for  this is  that
  entanglement  of  $\rho_l(G)$ and  $\rho_q(G)$  may  depend on  some
  symmetry hidden in the graph. Inspired by this idea we define here a
  notion   of  partial   symmetry.   We   generalise  the   result  of
  \cite{wu2006} to partially symmetric graphs. Then we derive a class
of partially  symmetric graphs which produce  separable quantum states
 $\rho_l(G)$ and  $\rho_q(G)$ of dimension  $m \times
n$. To the  best of our knowledge, there are  no sufficient conditions
till date for separability of $m\times n$ systems arising from graphs.
How  to generate  bigger  graphs providing  separable
  states from smaller graphs? We  define a graph product $G\bowtie H$
for  a simple  graph $G$  and a  partially symmetric  graph $H$  which
corresponds to separable bipartite states.

We  collect our  ideas  related  to separability  and
  partially symmetric graphs in the Section 2.  Here we also introduce
  the concept  of multi-layered system  in the context of  graphs.  In
  Section 3, we  use graph isomorphism as  an entanglement generator.
As a  by-product of the  separability criteria, we propose  some graph
isomorphisms, which are non-local  in nature, to generate entanglement
from a given partially symmetric graph. Finally, we provide an example
of an entangled state generated  by employing this non-local operation
on a partially symmetric graph which represents separable states. Thus,
we  conclude  from this  example  that  non-local operations  are  not
limited to the use of CNOT  gate operations on separable states, as is
observed in quantum information theory.   We then make our conclusions
and bring out some open problems arising from this work.

\section{Partial symmetric graphs and separability}

This section begins with the  creation of layers in a
  graph $G$. It partitions density  matrices $\rho(G)$ into blocks. We
  also define  graph theoretical  partial transpose (GTPT),  the graph
  theoretical analogue  of partial  transpose. This is  an equivalence
  relation on the  set of all graphs. GTPT  equivalent graphs preserve
  the  separability  property.   Next,  we  define  partial  symmetric
  graphs. A sufficiency condition is provided for separability of states which arise from
    partially  symmetric graphs.   We also  define  a
  product operation for two graphs such that the density matrices corresponding to the resultant graph represent separable states.

Let the vertex set of the graph $G$, $V(G)$, with $mn$ number of vertices be labelled by integers $1, 2, \dots, mn$. Then partition $V(G)$ into $m$ layers with $n$ vertices in each layer. Let the layers be $C_1,  C_2, \dots , C_m$ where $C_i = \{v_{i,1}, v_{i,2}, \dots v_{i,n}\}$ and $v_{i,k} = ni + k$. This allows $A(G)$ to be partitioned into blocks as follows.
\begin{equation}
A(G) = \begin{bmatrix}
		A_{1} & A_{1,2} & \dots & A_{1,m} \\
		A_{2,1} & A_{2} & \dots & A_{2,m} \\
		\vdots & \vdots & \vdots & \vdots\\
		A_{m,1} & A_{m,2} &\dots &A_{m}
		\end{bmatrix},
\end{equation}
where $A_{i,j}, i \neq j$  and $A_i$ are matrices of order  $n$. $A_{i,j}, i\neq j$ represents edges between $C_i$  and $C_j$.  $A_i$ represents edges between vertices of $C_i$. Trivially, $A_{i}^T = A_{i}$ and $A_{i,j}^T = A_{j,i}$ for all $i\neq j$. Observe that $A_{i,j}$ need not be symmetric. Throughout this article, $G$ is a simple graph with standard labelling on the vertex set $V(G) = \{1, 2, \dots ,mn\}$ with layers as described above.

This  article  deals with quantum  entanglement of bipartite states of dimension $m\times n$ which arise from simple graphs of $mn$ vertices. We mention that the  bipartition does not  exist a-priori in the graph, but is induced  by the above layering.  We wish  to understand how the two abstract ``particles'', created by this induction  based on vertex labellings are related to $G$.  $V(G)$ is arranged as a matrix of dots as follows.
\begin{align}
C_1 &= \xymatrix{\bullet_{v_{1,1}} & \bullet_{v_{1,2}} & \dots & \bullet_{v_{1,n}}}\nonumber \\
C_2 &= \xymatrix{\bullet_{v_{2,1}} & \bullet_{v_{2,2}} & \dots & \bullet_{v_{2,n}}}\nonumber \\
\vdots& \nonumber \\
C_m &= \xymatrix{\bullet_{v_{m,1}} & \bullet_{v_{m,2}} & \dots & \bullet_{v_{m,n}}}
\label{al:particle}
\end{align}
Effectively, one particle, of dimension  $n$, is assumed to correspond to horizontal  direction, whilst  another particle, of  dimension $m$, corresponds to the  perpendicular direction.  Thus, the first and second indices of every vertex label $A_{j,k}$ comes from the vertical and horizontal particles, respectively.  More particles can be induced in the system in different orthogonal  directions by drawing $G$ in an orthogonal  higher  dimensional  structure,  which  will  be  explored elsewhere.  In  the analogous construction  for a 3-partite  system of dimension  $lmn$,  we can arrange  the  entries  of $A_{j,k}$  as  a three-dimensional stack, with the vertical  layer of height $l$. Then the  entries $A_{j,k}$ ($j=1,\cdots,m; k=1,\cdots,n$)  will be on the ``ground'' layer, with the next layer having the entries $A_{j,k}$ $(j=m+1,\cdots,2m; k=n+1,\cdots,2n)$  and in  general the  $r$th layer ($1\le r\le  l$) having the entries  $A_{j,k}$ ($j=(r-1)m+1,\cdots,rm; k=(r-1)n+1,\cdots,rn$).  Note  that this scheme can  be introduced for any number of  induced particles, but the  simple assignment of direction to  particles  as  ``vertical''  and  ``horizontal''  will  no  longer be possible for three or more particles.

Let us return to  the bipartite case. As defined in \cite{severini2008, wu2006}, we recall that partially transposed graph $G'$ is obtained by employing the algebraic partial transposition to the adjacency matrix of a given graph $G.$ This idea is equivalent to partial transpose on the 2-nd party in a bipartite systems density matrix. For convenience in dealing with our labelling of the vertices in the graph $G$, we reformulate the definition of partially transposed graph by introducing it as a by-product of the following combinatorial operation.

\begin{de}
Graph theoretical partial transpose (GTPT) is an operation on the graph $G$ replacing  all existing edges $(v_{i,k},  v_{j,l}),  k  \neq  l,  i \neq  j$  by $(v_{i,l},v_{j,k})$, keeping all other edges unchanged.
\end{de}

Thus, GTPT generates a new simple graph $G^\prime = (V(G^\prime),E(G^\prime))$ from a given simple graph $G=(V(G), E(G))$ where $V(G^\prime) = V(G)$ with the labelling unchanged. Note that,  $G$ can also  be constructed from $G^\prime$ by GTPT as, $(G^\prime)^\prime = G$.  We  call $G$  and $G^\prime$ as  GTPT equivalent. It is easy to verify that $A(G^\prime) = A(G)^{T_B}$ and hence $|E(G)|=|E(G')|.$

\begin{ex}\label{K13}
The GTPT of the star graph with four vertices is depicted below.
$$\xymatrix{\bullet_1 \ar@{-}[d] & \bullet_2\ar@{-}[dl] \\ \bullet_3 \ar@{-}[r]& \bullet_4} \xrightarrow{GTPT} \xymatrix{\bullet_1 \ar@{-}[d] \ar@{-}[dr]& \bullet_2 \\ \bullet_3 \ar@{-}[r]& \bullet_4}$$
\end{ex}

The Example \ref{K13} establishes that GTPT of a connected graph need not be connected. Also, it changes the degree sequence of the graph. A relevant question here is: does there exist a graph for which the degree sequence remains invariant under GTPT? Inspired by the degree criteria introduced in \cite{braunstein2006, severini2008}, we define degree symmetric graphs as follows.

\begin{de}
A graph $G$ is called degree symmetric if $d_G(u) = d_{G^\prime}(u) \,\, \mbox{for all} \,\, u\in V(G) = V(G^\prime)$.
\end{de}

Thus, for a degree symmetric graph, the degree sequence of the graph is preserved under GTPT. The following is an example of a degree symmetric graph.
\begin{ex}\label{degree symmetric graph}\cite{braunstein2006}
$$\xymatrix{\bullet_1 \ar@{-}[dr] & \bullet_2 \ar@{-}[dr] & \bullet_3 \ar@{-}[dr] & \bullet_4 \ar@{-}[dr] & \bullet_5 \ar@{-}[dllll] \\ \bullet_6 & \bullet_7 & \bullet_8 & \bullet_9 & \bullet_{10}}$$
\end{ex}

It was conjectured in \cite{braunstein2006} that $\rho_l(G)$ is a separable bipartite state in any dimension if and only if $G$ and $G'$ have the same degree sequence. In other words, $\rho_l(G)$ is separable if and only if $G$ is a degree symmetric graph. Later the conjecture was proved to be false in \cite{severini2008}. An example of a degree symmetric graph $G$ was provided for which $\rho_l(G)$ is entangled. It was established that PPT criteria is equivalent to the degree criteria for $\rho_l(G)$. However, the separability of $\rho_l(G')$ and $\rho_q(G')$ was not discussed there \cite{severini2008}. In this work we prove that degree symmetric graphs preserve the separability even after GTPT. This result can be stated as a theorem.

\begin{theorem}\label{theorem1}
Separability    of   $\rho_l(G)$    implies   the    separability   of
$\rho_l(G^\prime)$ if and  only if $G$ is  degree symmetric. Similarly
separability of  $\rho_q(G)$ implies  separability of  $\rho_q(G')$ if
and only if $G$ is degree symmetric.
\end{theorem}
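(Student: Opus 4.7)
The plan is to reduce the theorem to the standard fact that partial transpose preserves separability, by first identifying $\rho_l(G') = \rho_l(G)^{T_B}$ (and analogously $\rho_q(G') = \rho_q(G)^{T_B}$) precisely when $G$ is degree symmetric. Since GTPT is defined so that $A(G') = A(G)^{T_B}$, and the diagonal matrix $D(G)$ is invariant under partial transpose, I compute $L(G)^{T_B} = D(G) - A(G')$ while $L(G') = D(G') - A(G')$. Hence $L(G') = L(G)^{T_B}$ if and only if $D(G) = D(G')$, which is exactly degree symmetry. Normalisation is consistent because GTPT preserves the edge count, so $\tr(L(G)) = \tr(L(G')) = 2|E(G)|$. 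Replacing $-$ by $+$ in the same computation gives $Q(G') = Q(G)^{T_B}$ under degree symmetry, covering the $\rho_q$ case in parallel.

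For the ``if'' direction, assume $G$ is degree symmetric, so $\rho_l(G') = \rho_l(G)^{T_B}$. If $\rho_l(G) = \sum_k p_k\, \rho_k^{(A)} \otimes \rho_k^{(B)}$ is separable, then
$$\rho_l(G') = \sum_k p_k\, \rho_k^{(A)} \otimes (\rho_k^{(B)})^T$$
is again a convex combination of product states, since the transpose of a density matrix is a density matrix. Hence $\rho_l(G')$ is separable, and the same argument handles $\rho_q$. For the ``only if'' direction, I assume the hypothesis of the implication is realisable, i.e.\ $\rho_l(G)$ is separable; then $\rho_l(G)$ is PPT, and the Severini equivalence of the PPT and degree criteria (recalled just above the theorem) forces $G$ to be degree symmetric. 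The analogous statement for $\rho_q$ follows from the corresponding degree/PPT equivalence.

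The main subtle point I expect is the interpretation of the ``only if'' direction: if $G$ is not degree symmetric, then by the Severini equivalence $\rho_l(G)$ already fails PPT, so it is entangled, and the implication ``$\rho_l(G)$ separable $\Rightarrow$ $\rho_l(G')$ separable'' becomes vacuously true. The theorem must therefore be read as characterising precisely the graphs for which the premise can hold at all; under degree symmetry, the partial-transpose identification then delivers the conclusion cleanly. Verifying that the same equivalence between PPT and degree symmetry also governs $\rho_q$ is the one technical check beyond Severini's theorem, but it is obtained by the identical block-matrix calculation applied to $Q(G)$ in place of $L(G)$.
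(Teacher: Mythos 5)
Your forward (``if'') direction is essentially the paper's own proof: you use the same identity $L(G)^{T_B}=(D(G)-D(G'))+L(G')$, note that degree symmetry kills the correction term so that $\rho_l(G')=\rho_l(G)^{T_B}$, and transfer separability by transposing the $B$-factors termwise; the signless case is handled in parallel, and the normalisation remark $\tr(L(G))=\tr(L(G'))$ is fine. Your reading of the ``only if'' as a statement about when the premise is realisable is also a reasonable gloss on a loosely worded theorem, and for $\rho_l$ the appeal to the Hildebrand--Mancini--Severini equivalence of PPT and the degree criterion does give ``$\rho_l(G)$ separable $\Rightarrow$ $G$ degree symmetric''; the paper's appendix argument, for what it is worth, establishes only the forward implication and is silent on this direction.

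The genuine gap is your final claim that the PPT/degree equivalence ``also governs $\rho_q$'' and is ``obtained by the identical block-matrix calculation''. The identical calculation only yields $Q(G)^{T_B}=(D(G)-D(G'))+Q(G')$. In the Laplacian case the conclusion $D(G)=D(G')$ is extracted from positivity via the all-ones vector $e$, which lies in the kernel of $L(G')$, so that $e^{T}L(G)^{T_B}e=0$ forces $(D(G)-D(G'))e=0$; there is no analogous null vector for $Q(G')$ (indeed $e^{T}Q(G')e=4|E(G')|>0$), so the argument does not transfer. Worse, the transferred statement is false: take the paper's Example 1, the star with edges $(1,3),(2,3),(3,4)$ and layers $C_1=\{1,2\}$, $C_2=\{3,4\}$, so that $G'$ has edges $(1,3),(1,4),(3,4)$. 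Then $Q(G)^{T_B}=D(G)+A(G')$ has quadratic form $(x_1+x_3+x_4)^2+x_2^2+2x_3^2\ge 0$, hence $\rho_q(G)$ is PPT and, being a $2\times 2$ state, separable, although $G$ is not degree symmetric ($d_G=(1,1,3,1)$ versus $d_{G'}=(2,0,2,2)$). So ``separability of $\rho_q(G)$ implies degree symmetry'' fails, and your route to the $\rho_q$ half of the ``only if'' collapses; the example does not contradict the theorem read literally, since $\rho_q(G')$ here is NPT (its partial transpose has a zero diagonal entry in the row of vertex $2$ with a nonzero off-diagonal entry), but it does defeat the proof you propose. Any nonvacuous converse for $\rho_q$ would have to be argued differently, e.g.\ by showing directly that whenever $G$ is not degree symmetric and $\rho_q(G)$ is separable, $\rho_q(G')$ is entangled --- a point addressed neither in your sketch nor in the paper.
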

The proof can be found in the appendix.

Now we introduce the concept  of partially symmetric graphs. This will
play  a  central   role  in  the  development  of  the   rest  of  the
paper. Our aim is to make a more stringent condition
of symmetry in a degree symmetric graph. We focus on
symmetry in  the partial  transposition of the  adjacency matrix  of a
graph and  hence, define partially  symmetric graphs (in  analogy with
``partial transpose'') as follows.

\begin{de}
A graph $G$ is partially symmetric if $(v_{i,l}, v_{j,k}) \in E(G)$ implies $(v_{i,k}, v_{j,l}) \in E(G) ~\forall~ i, j, k, l, i\neq j$.
\end{de}

In the above definition, $i$ and $j$ indicates layers $C_i$ and $C_j$ such that vertices $v_{i,l} \in C_i$ and  $v_{j,k} \in C_j$. Suffixes $l$ and $k$ represents the relative positions of the vertices in the individual layers.

Note that, GTPT keeps a partial symmetric graph unchanged as, $A_{i,j} = A_{i,j}^T, D(G) = D(G')$. This leads to the following lemma.

\begin{lemma}\label{partial vs degree sym}
Every partial symmetric graph $G$ is degree symmetric.
\end{lemma}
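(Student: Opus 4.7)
The plan is to prove something stronger than claimed: that for a partially symmetric graph $G$, the graph theoretical partial transpose is actually the identity, i.e.\ $G' = G$ as labelled graphs, from which equality of the full degree sequences follows at once. This is precisely the parenthetical observation the excerpt already previews (``$A_{i,j} = A_{i,j}^T$, $D(G)=D(G')$''), so the task is to make it rigorous.

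I would organise the argument around the block decomposition of $A(G)$ into the $A_i, A_{i,j}$ dictated by the layers $C_1,\dots,C_m$. First I would translate the definition of partial symmetry into the matrix statement $A_{i,j} = A_{i,j}^T$ for all $i \ne j$: an entry $(A_{i,j})_{l,k}=1$ encodes the edge $(v_{i,l},v_{j,k})$, and the defining implication $(v_{i,l},v_{j,k})\in E(G)\Rightarrow (v_{i,k},v_{j,l})\in E(G)$ says exactly that $(A_{i,j})_{l,k}=(A_{i,j})_{k,l}$. Combined with the always-true identities $A_i^T = A_i$ and $A_{i,j}^T = A_{j,i}$, this gives $A(G)^{T_B} = A(G)$, since partial transposition along $B$ is the blockwise transpose. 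Because the excerpt states $A(G') = A(G)^{T_B}$, we obtain $A(G') = A(G)$, hence $D(G') = D(G)$, and therefore $d_{G}(u) = d_{G'}(u)$ for every $u$, which is the definition of degree symmetry.

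A purely combinatorial rendering, which I would mention as an alternative, is to classify the edges of $G$ by how GTPT acts on them. Edges with both endpoints in the same layer (i.e.\ $i=j$) are explicitly left fixed by GTPT. Edges $(v_{i,k},v_{j,l})$ between distinct layers with $k=l$ are also untouched, since GTPT only moves edges with $k \ne l$. Finally, for an edge $(v_{i,k},v_{j,l})$ with $i\ne j$ and $k\ne l$, GTPT replaces it with $(v_{i,l},v_{j,k})$; but partial symmetry guarantees the swapped edge is already present in $E(G)$, so the map is simply swapping two edges that both belong to $E(G)$. In every case the edge multiset is preserved, giving $E(G')=E(G)$ and thus $G'=G$.

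There is no genuine obstacle here: the statement is almost immediate once partial symmetry is recognised as the symmetry of the off-diagonal blocks $A_{i,j}$. The only care required is bookkeeping the edge cases (intra-layer edges and inter-layer edges with $k=l$) that fall outside the scope of the defining implication, so that one does not mistakenly claim partial symmetry forces symmetry of the diagonal blocks $A_i$ (which is automatic anyway) or of entries with $k=l$ (which GTPT ignores). Once those are handled cleanly, the conclusion $d_G(u) = d_{G'}(u)$ is a one-line consequence.
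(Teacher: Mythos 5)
Your proposal is correct and follows essentially the same route as the paper, which justifies the lemma by the one-line observation that partial symmetry forces $A_{i,j}=A_{i,j}^{T}$ for $i\neq j$, so the blockwise partial transpose fixes $A(G)$ and hence $D(G)=D(G')$. Your write-up merely makes that remark rigorous (including the harmless edge cases with $i=j$ or $k=l$), proving the stronger fact $G'=G$ exactly as the paper implicitly does.
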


The converse  of the  Lemma \ref{partial  vs degree  sym} need  not be
true. There   are  many  graphs  which   are  degree
  symmetric but  not partially  symmetric. For example,  consider the
graph depicted  in  Example  \ref{degree  symmetric
  graph}.

The  above  lemma  leads  us to  the  next  theorem.
It is a  sufficient condition  for separability  of
density matrices  arising from  partial symmetric graphs.   We mention
that, this  result generalizes  the result  of \cite{wu2006},  where a
similar result was obtained for a $2 \times n$ dimensional system.

\begin{theorem} \label{main}
Let $G$ be a partially symmetric graph with the following properties.
\begin{itemize}
\item
Between two vertices  of any partition $C_i$ there is  no edge. $(v_{i,l}, v_{i,k}) \notin E(G)$ for all $i, l, k$.
\item
Either there is no edge between vertices of $C_i$ and $C_j,$ or $A_{i,j}=A_{k,l}$ for all $i,j,k,l, i\neq j$ and $k\neq l.$
\item
Degrees of  all the vertices in a layer are same, i.e., $d_{C_i}(v_r) =
d_{C_i}(v_s)$ for all $v_r, v_s \in C_i,$ for all $i$.
\end{itemize}
Then  $\rho(G)$  is separable  i.e.  $\rho(G)  = \sum_i  w_i  \rho_A^i
\otimes \rho_B^i, \sum_i w_i = 1$.
\end{theorem}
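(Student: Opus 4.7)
The plan is to give a direct edge-by-edge separable decomposition of both $L(G)$ and $Q(G)$, exploiting partial symmetry to group the edges into pairs whose joint contribution splits into a sum of tensor-product PSD operators. Throughout, I would use the bipartite identification $v_{i,k}\longleftrightarrow e_i\otimes e_k\in\mathbb{C}^m\otimes\mathbb{C}^n$ and the standard rank-one decompositions $L(G)=\sum_{\{u,w\}\in E(G)}(e_u-e_w)(e_u-e_w)^{T}$ and $Q(G)=\sum_{\{u,w\}\in E(G)}(e_u+e_w)(e_u+e_w)^{T}$.

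First, I would partition $E(G)$ using the first hypothesis together with partial symmetry. The absence of intra-layer edges forces every edge to have the form $(v_{i,l},v_{j,k})$ with $i\neq j$. If $k=l$ the edge is its own partial-symmetric partner; if $k\neq l$, partial symmetry guarantees that $(v_{i,k},v_{j,l})$ also lies in $E(G)$, so such edges come in pairs. This partitions $E(G)$ into singletons and proper pairs.

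Next, I would compute the contribution of each class. A singleton $(v_{i,k},v_{j,k})$ gives the rank-one product term $(e_i\mp e_j)(e_i\mp e_j)^T\otimes e_ke_k^T$ in $L(G)$ (resp.\ $Q(G)$). For a proper pair $e_1=(v_{i,l},v_{j,k})$, $e_2=(v_{i,k},v_{j,l})$, direct expansion produces $(e_ie_i^T+e_je_j^T)\otimes(e_ke_k^T+e_le_l^T)\mp(e_ie_j^T+e_je_i^T)\otimes(e_ke_l^T+e_le_k^T)$. Applying the polarization identities $e_ae_a^T+e_be_b^T=\tfrac12[(e_a+e_b)(e_a+e_b)^T+(e_a-e_b)(e_a-e_b)^T]$ and $e_ae_b^T+e_be_a^T=\tfrac12[(e_a+e_b)(e_a+e_b)^T-(e_a-e_b)(e_a-e_b)^T]$ to both tensor slots, the cross terms cancel, and with $X_\pm=(e_i\pm e_j)(e_i\pm e_j)^T$, $Y_\pm=(e_k\pm e_l)(e_k\pm e_l)^T$ the Laplacian pair collapses to $\tfrac12[X_+\otimes Y_-+X_-\otimes Y_+]$, while the signless Laplacian pair collapses to $\tfrac12[X_+\otimes Y_++X_-\otimes Y_-]$. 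Each summand is an unnormalized pure product state.

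Summing over all singletons and pairs then expresses $L(G)$ and $Q(G)$ as non-negative combinations of rank-one product PSD operators; dividing by $\tr(L(G))$ or $\tr(Q(G))$ yields the asserted convex decomposition $\rho(G)=\sum_i w_i\rho_A^i\otimes\rho_B^i$. The main obstacle is essentially bookkeeping: one must track signs carefully in the polarization step so that the Laplacian pair produces the ``opposite-parity'' combination and the signless Laplacian pair the ``same-parity'' combination. Hypotheses 2 and 3 encode a Kronecker factorization $A(G)=M\otimes B$ with $B$ a regular $0/1$ matrix, which would underpin an alternative structural proof, but the edge-pair argument above in fact uses only partial symmetry and the absence of intra-layer edges.
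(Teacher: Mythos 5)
Your decomposition is correct, but it is a genuinely different argument from the paper's. The paper exploits hypotheses 2 and 3 to get a Kronecker-type structure: all nonzero off-diagonal blocks equal a single symmetric matrix $A_{i,j}$ with spectral decomposition $\sum_r \lambda_r u_r u_r^t$, and the diagonal blocks are scalar matrices $d_iI$; it then writes $L(G)=\sum_r B(r)\otimes u_ru_r^t$, where $B(r)$ has diagonal entries $d_i$ and off-diagonal entries $\lambda_r$ (or $0$), and argues that each $B(r)$ is positive semidefinite via a diagonal-dominance estimate on $\lambda_r$, giving separability with at most $n$ product terms. You instead work edge by edge: you use the standard rank-one expansions $L(G)=\sum_{\{u,w\}}(e_u-e_w)(e_u-e_w)^T$ and $Q(G)=\sum_{\{u,w\}}(e_u+e_w)(e_u+e_w)^T$, pair each cross edge $(v_{i,l},v_{j,k})$, $k\neq l$, with its partially symmetric partner $(v_{i,k},v_{j,l})$ (the pairing is a fixed-point-free involution on such edges, and edges with $k=l$ are self-paired), and your polarization computation correctly collapses each pair to $\tfrac12[X_+\otimes Y_-+X_-\otimes Y_+]$ for $L$ and $\tfrac12[X_+\otimes Y_++X_-\otimes Y_-]$ for $Q$, all manifestly product PSD. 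What your route buys is that it is more elementary (no spectral decomposition, no diagonal-dominance bound, which in the paper is in fact the delicate step) and strictly stronger: hypotheses 2 and 3 are never used, and even hypothesis 1 could be dropped, since an intra-layer edge $(v_{i,l},v_{i,k})$ contributes $e_ie_i^T\otimes(e_l\mp e_k)(e_l\mp e_k)^T$, again a product term; so you actually prove separability of $\rho_l(G)$ and $\rho_q(G)$ for every partially symmetric graph. What the paper's route buys is a compact decomposition with few summands that makes the underlying tensor factorization $A(G)\approx M\otimes A_{i,j}$ explicit, which is the structural picture you allude to at the end but do not need.
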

Its proof is deferred to the appendix.

\begin{ex}\label{H}
An     example    of     a    partially     symmetric    graph     $H$
satisfying  all the conditions of
Theorem   \ref{main}  is   as  follows.   $$H  =   \xymatrix{\bullet_1
  \ar@{-}[dr]  &  \bullet_2  \ar@{-}[dl]  &  \bullet_3  \ar@{-}[dr]  &
  \bullet_4   \ar@{-}[dl]  \\   \bullet_5   \ar@{-}[dr]  &   \bullet_6
  \ar@{-}[dl]   &  \bullet_7   \ar@{-}[dr]  &   \bullet_8  \ar@{-}[dl]
  \\ \bullet_9 & \bullet_{10} & \bullet_{11} & \bullet_{12} }$$
\end{ex}

Theorem \ref{main} is a sufficient condition but not necessary. There are classes of partial symmetric graphs generating separable states without satisfying conditions of this theorem. Some of them will be discussed now.

Recall that, the union graph of two graphs $G = (V(G),E(G))$ and $H = (V(H),E(H))$ is defined as the new graph $G \cup H = (V(G) \cup V(H), E(G) \cup E(H))$ \cite{west2001}. Let $G$  be a graph of order $n$ with vertex labelling  $\{1, 2, \dots n\}$. Define, $mG = G  \cup G \cup \dots \cup G$ (union of $m$-copies of $G$) with vertex labelling $\{v_{j,k}: v_{j,k} = jn + k\}$. Note that, copies of $G$ form the layers of $mG$. There is no edge between two layers. Hence, $mG$ is trivially partially symmetric and it violates the $1$-st condition of Theorem \ref{main} which states that there will be no edge between two vertices located in the same layer. Interestingly, we will show now that $mG$ represents separable states. Observe that \begin{eqnarray*}
A(mG) &= \mbox{diag}\{A(G), A(G), \dots , A(G) (m ~\mbox{times})\} &= I_m \otimes A(G),\\
D(mG) &= \mbox{diag}\{D(G), D(G), \dots , D(G) (m ~\mbox{times})\} &= I_m \otimes D(G),\\
L(mG) &= \mbox{diag}\{L(G), L(G), \dots , L(G) (m ~\mbox{times})\} &= I_m \otimes L(G),\\
Q(mG) &= \mbox{diag}\{Q(G), Q(G), \dots , Q(G) (m ~\mbox{times})\} &= I_m \otimes Q(G).
\end{eqnarray*}
where, $I_m$ denotes the identity matrix of order $m$. Trivially, $\rho_l(mG) = \frac{L(mG)}{\mbox{trace}(L(mG))}$ and $\rho_q(mG) = \frac{Q(mG)}{\mbox{trace}(Q(mG))}$ are separable states. This result may be expressed as follows.
\begin{lemma}
For any graph  $G$, $\rho_l(mG)$ and $\rho_q(mG)$ represent separable bipartite states of dimension $m\times n$ w.r.t standard labelling on $mG$.
\end{lemma}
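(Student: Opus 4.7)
The plan is to read off separability directly from the block-diagonal form of $A(mG)$ that is already displayed in the lines preceding the lemma. Since $mG$ is the disjoint union of $m$ copies of $G$, no edge connects different layers $C_i$ and $C_j$; equivalently, every off-diagonal block $A_{i,j}$ of $A(mG)$ vanishes and every diagonal block $A_i$ equals $A(G)$. The same block structure holds for the degree matrix, and hence for both Laplacians, giving $L(mG) = I_m \otimes L(G)$ and $Q(mG) = I_m \otimes Q(G)$.

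To convert these into density matrices I would use the identity $\tr(I_m \otimes X) = m\,\tr(X)$, which yields the explicit product forms
\[
\rho_l(mG) = \frac{I_m}{m} \otimes \rho_l(G), \qquad \rho_q(mG) = \frac{I_m}{m} \otimes \rho_q(G).
\]
In each expression, $\frac{I_m}{m}$ is the maximally mixed density matrix of order $m$, associated with the vertical factor of the bipartition induced by the standard layering, while the second tensor factor is a bona fide density matrix of order $n$ on the horizontal factor.

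A product state is trivially separable --- it is the $k=1$ case of the convex decomposition $\rho = \sum_k p_k \rho_k^{(A)} \otimes \rho_k^{(B)}$ with $p_1 = 1$ --- and this concludes the argument for both $\rho_l(mG)$ and $\rho_q(mG)$. Essentially all of the work has already been absorbed into the block-diagonal identities preceding the lemma, so there is no real obstacle here; the only point worth double-checking is that the Kronecker factorization $I_m \otimes \rho(G)$ is compatible with the $A$--$B$ bipartition prescribed by the standard labelling $v_{i,k}$, which is immediate from the layering convention used throughout the paper.
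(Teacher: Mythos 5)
Your argument is correct and follows essentially the same route as the paper: both rest on the identities $L(mG) = I_m \otimes L(G)$ and $Q(mG) = I_m \otimes Q(G)$ already displayed before the lemma, after which normalization gives a product state $\frac{I_m}{m} \otimes \rho(G)$, which is trivially separable. Your write-up merely makes explicit the normalization and the one-term convex decomposition that the paper labels ``trivially'' separable.
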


Note that, $G$ may not correspond to a separable state  but $mG$ always represents a separable state. This lemma is significant as it suggest more general conditions for separability.

We define a new graph operation as follows. Consider a partially symmetric graph $H$ with $m$ different layers, each layer having $n$ number of vertices and $H$ satisfies all the conditions of Theorem \ref{main}, while $G$ is a simple graph with $n$ vertices. We define the new graph $G\bowtie H$ as the graph which is constructed by replacing each layer of $H$ by the graph $G$. Note that, $V(G \bowtie H) = V(H)$. An example which illustrates the operation $G\bowtie H$, is given below.

\begin{ex}
Consider the star graph $G$ with four vertices given by
$$\xymatrix{& \bullet_4 \ar@{-}[d] & \\ \bullet_1 \ar@{-}[r] & \bullet_2 \ar@{-}[r] & \bullet_3} $$
and $H$ is a graph given in Example \ref{H}. Then the graph $G \bowtie H$ is as follows.
$$\xymatrix{& \bullet_4 \ar@{-}[d] \ar@{-}[rddd] & \\ \bullet_1 \ar@{-}[r] \ar@{-}[rdd] & \bullet_2 \ar@{-}[r] \ar@{-}[ldd]& \bullet_3 \ar@{-}[ld]\\ & \bullet_8 \ar@{-}[d] \ar@{-}[rddd] & \\ \bullet_5 \ar@{-}[r] \ar@{-}[rdd] & \bullet_6 \ar@{-}[r] \ar@{-}[ldd]& \bullet_7 \ar@{-}[ld] \\ & \bullet_{12} \ar@{-}[d] & \\ \bullet_9 \ar@{-}[r] & \bullet_{10} \ar@{-}[r] & \bullet_{11}}$$
\end{ex}

Now we present some properties of $G\bowtie H$ where $G$ and $H$ are the graphs as discussed above.

$H$ satisfies all the conditions of theorem (\ref{main}). Hence, there is no edge joining two vertices belonging to the same layer. This implies that the diagonal blocks of $A(H)$ are zero matrices. Graph $G$ is placed $m$ times on the layers of $H$. Thus all $m$ diagonal blocks of $A(G \bowtie H)$ are $A(G)$. Hence, we have the following lemma.

\begin{lemma}
$A(G \bowtie H) = A(mG) + A(H)$, where $I_m$ is the identity matrix of order $m$.
\end{lemma}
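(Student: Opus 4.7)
The plan is to verify the identity block by block, using the standard layered partition of the vertex set $V(G\bowtie H) = V(H) = \bigcup_{i=1}^m C_i$ that organizes an $mn \times mn$ adjacency matrix into $m\times m$ blocks of size $n\times n$. The whole proof reduces to describing the blocks of the left-hand side from the construction of $\bowtie$, and comparing them with the blocks of the right-hand side.

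First I would read off the block structure of $A(G\bowtie H)$ directly from the definition of the operation. By construction, the inter-layer edges of $G\bowtie H$ are exactly those of $H$: only the within-layer structure is modified, by planting a fresh copy of $G$ on each $C_i$ with the induced labelling. Consequently, for $i\neq j$ the off-diagonal block of $A(G\bowtie H)$ coincides with the off-diagonal block $A_{i,j}(H)$, and each diagonal block equals $A(G)$.

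Next I would expand the right-hand side blockwise. The preceding display already records $A(mG) = I_m \otimes A(G)$, so $A(mG)$ has diagonal blocks equal to $A(G)$ and all off-diagonal blocks zero. For $A(H)$, the first hypothesis imposed on $H$ via Theorem \ref{main} asserts that $(v_{i,l},v_{i,k}) \notin E(H)$, so every diagonal block $A_i(H)$ is the zero matrix; the off-diagonal blocks $A_{i,j}(H)$ of course survive. Summing block by block, $A(mG)+A(H)$ has diagonal blocks $A(G)+0 = A(G)$ and off-diagonal blocks $0 + A_{i,j}(H) = A_{i,j}(H)$, which matches the block description of $A(G\bowtie H)$ obtained in the previous step.

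This argument is essentially bookkeeping and I do not anticipate any real obstacle; the only subtlety to flag is that the identity relies crucially on the first condition of Theorem \ref{main} (otherwise the diagonal blocks of $A(H)$ would not vanish and the sum would double-count intra-layer edges). The parenthetical remark about $I_m$ in the lemma statement plays no role in the identity itself; it is a vestige of writing $A(mG) = I_m \otimes A(G)$.
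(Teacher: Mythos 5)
Your proof is correct and follows essentially the same route as the paper, which likewise argues blockwise: the first condition of Theorem \ref{main} forces the diagonal blocks of $A(H)$ to vanish, the construction places $A(G)$ on each diagonal block of $A(G\bowtie H)$, and the off-diagonal blocks are inherited from $H$. Your observations that the hypothesis on $H$ is what prevents double-counting intra-layer edges and that the mention of $I_m$ is a vestige of $A(mG)=I_m\otimes A(G)$ are both accurate.
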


It is clear from the construction of $G \bowtie H$ that the degree of a vertex in $G \bowtie H$ is the sum of its degree in $H$ and its degree in $G$. Incorporating this in the expression of $A(G \bowtie H)$, we obtain the following result.
\begin{lemma}
$D(G \bowtie H) = D(mG) + D(H)$.
\end{lemma}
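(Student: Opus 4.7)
The plan is to derive the degree identity directly from the previously established adjacency identity $A(G\bowtie H)=A(mG)+A(H)$, using the elementary fact that the degree of a vertex equals the sum of entries in the corresponding row of the adjacency matrix.

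First I would observe that since $H$ satisfies the first hypothesis of Theorem~\ref{main}, the diagonal blocks $A_i$ of $A(H)$ vanish, so $H$ has no intra-layer edges; meanwhile $mG$, being a disjoint union of copies of $G$ placed on the layers, has no inter-layer edges. Consequently the edge sets $E(mG)$ and $E(H)$ are disjoint, and the construction of $G\bowtie H$ gives $E(G\bowtie H)=E(mG)\sqcup E(H)$. For every vertex $v\in V(G\bowtie H)=V(H)$, the edges incident to $v$ split into those belonging to $mG$ (the edges within $v$'s layer coming from the copy of $G$ placed there) and those belonging to $H$ (the edges joining $v$ to vertices in other layers). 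Hence
\begin{equation}
d_{G\bowtie H}(v)=d_{mG}(v)+d_{H}(v).
\end{equation}

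Equivalently, and more algebraically, from the preceding lemma and linearity of the row-sum operator, the $v$-th row sum of $A(G\bowtie H)$ equals the $v$-th row sum of $A(mG)$ plus the $v$-th row sum of $A(H)$, which is exactly the identity above. Since $D(\cdot)$ is by definition the diagonal matrix whose $v$-th entry is $d_{(\cdot)}(v)$, taking the diagonal matrix of both sides yields $D(G\bowtie H)=D(mG)+D(H)$, as required.

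No real obstacle is anticipated: the statement is essentially a bookkeeping consequence of the disjointness of $E(mG)$ and $E(H)$, which in turn is guaranteed by the first hypothesis of Theorem~\ref{main} on $H$. The only point worth flagging carefully is this disjointness hypothesis, since the lemma would fail without it (any intra-layer edge of $H$ would be counted twice in $mG\cup H$ versus once in $G\bowtie H$).
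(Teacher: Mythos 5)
Your proposal is correct and follows essentially the same route as the paper, which justifies the lemma by observing that the degree of a vertex in $G\bowtie H$ is the sum of its degree in the copy of $G$ on its layer and its degree in $H$. You merely make explicit what the paper leaves implicit, namely that the first hypothesis of Theorem~\ref{main} forces $E(mG)$ and $E(H)$ to be disjoint (so no edge is double-counted), and you phrase the degree additivity via row sums of the adjacency identity $A(G\bowtie H)=A(mG)+A(H)$; this is a welcome tightening but not a different argument.
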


The above two lemmas together imply the structure of the Laplacian $L(G)$ and the signless Laplacian $Q(G)$, i.e., the structures of the density matrices $\rho_l(G)$ and $\rho_q(G)$.

\begin{lemma}
$L(G\bowtie H) = L(H) + L(mG)$ and
$Q(G\bowtie H) = Q(H) + Q(mG)$.
\end{lemma}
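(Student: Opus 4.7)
The plan is to invoke the two immediately preceding lemmas and combine them with the defining identities $L(X) = D(X) - A(X)$ and $Q(X) = D(X) + A(X)$, which hold for every simple graph $X$. This makes the lemma an essentially one-line consequence of what has already been established, so the work is purely bookkeeping rather than conceptual.

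For the combinatorial Laplacian, I would start from $L(G\bowtie H) = D(G\bowtie H) - A(G\bowtie H)$, substitute $D(G\bowtie H) = D(mG) + D(H)$ from the previous lemma and $A(G\bowtie H) = A(mG) + A(H)$ from the lemma before that, and then regroup the four terms as $[D(mG) - A(mG)] + [D(H) - A(H)]$. Recognizing the two brackets as $L(mG)$ and $L(H)$ respectively gives the desired identity. The signless Laplacian case is identical up to the sign: start from $Q(G\bowtie H) = D(G\bowtie H) + A(G\bowtie H)$, substitute the same two decompositions, and regroup as $[D(mG) + A(mG)] + [D(H) + A(H)] = Q(mG) + Q(H)$.

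There is no real obstacle; the argument relies only on the linearity of the operations $D$ and $A$ under the additive decomposition of $G\bowtie H$ into the disjoint layer-copies $mG$ and the inter-layer skeleton $H$. The only thing worth remarking on explicitly is that $D$ and $A$ are indeed additive here, which is immediate because the edge set of $G\bowtie H$ is the disjoint union of $E(mG)$ (intra-layer edges coming from placing $G$ on each layer) and $E(H)$ (inter-layer edges of $H$, since $H$ has no intra-layer edges by the first hypothesis of Theorem \ref{main}), so no edge is counted twice in either the adjacency matrix or in the degree sum.
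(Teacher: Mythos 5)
Your proposal is correct and matches the paper's own proof: the paper likewise substitutes $A(G\bowtie H)=A(mG)+A(H)$ and $D(G\bowtie H)=D(mG)+D(H)$ into $L=D-A$ (writing $A(mG)=I_m\otimes A(G)$, $D(mG)=I_m\otimes D(G)$) and regroups, treating the signless Laplacian identically. Your extra remark on why the decompositions are disjoint (no intra-layer edges in $H$) is a harmless, slightly more careful restatement of what the preceding lemmas already encode.
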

\begin{proof}

\begin{align*}
L(G\bowtie H) & = D(G\bowtie H) - A(G\bowtie H)\\
& = I_m \otimes D(G) + D(H) - I_m \otimes A(G) - A(H)\\
& = I_m \otimes (D(G) - A(G)) + D(H) - A(H)\\
& = I_m \otimes L(G) + L(H)\\
& = L(H) + L(mG)
\end{align*}
\end{proof}
Similarly, $Q(G\bowtie H) = Q(H) + Q(mG)$.

All the above lemmas together indicate the separability of $G\bowtie H$.
\begin{theorem}
$G\bowtie H$ represents a bipartite separable state of dimension $m \times n.$
\end{theorem}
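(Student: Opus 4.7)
The plan is to exhibit $\rho(G\bowtie H)$ as a convex combination of two density matrices that are already known to be separable, and then invoke the fact that the set of separable states is convex.

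First I would pass from the matrix identities
\[
L(G\bowtie H)=L(H)+L(mG),\qquad Q(G\bowtie H)=Q(H)+Q(mG),
\]
established in the preceding lemma, to an identity for normalised density matrices. Taking traces (which is additive) gives
\[
\tr L(G\bowtie H)=\tr L(H)+\tr L(mG),
\]
and similarly for $Q$. Dividing through by $\tr L(G\bowtie H)$ expresses $\rho_l(G\bowtie H)$ as
\[
\rho_l(G\bowtie H)=\lambda\,\rho_l(H)+(1-\lambda)\,\rho_l(mG),
\]
where $\lambda=\tr L(H)/\tr L(G\bowtie H)\in[0,1]$; the analogous decomposition holds for $\rho_q(G\bowtie H)$.

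Next I would argue separability of each summand under the $m\times n$ bipartition induced by the layering. For $\rho_l(H)$ and $\rho_q(H)$ this follows directly from Theorem \ref{main}, since $H$ satisfies its three hypotheses by assumption. For $\rho_l(mG)$ and $\rho_q(mG)$, the earlier lemma gave
$L(mG)=I_m\otimes L(G)$ and $Q(mG)=I_m\otimes Q(G)$; after normalisation these are of the form $\frac{1}{m}I_m\otimes\sigma$ with $\sigma=L(G)/\tr L(G)$ or $Q(G)/\tr Q(G)$, hence manifestly product states with respect to the $m\times n$ split, and in particular separable.

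Finally, since the set of separable states of a fixed bipartite dimension is convex, any convex combination of separable states is separable; applying this to the decomposition above gives the separability of $\rho_l(G\bowtie H)$ and $\rho_q(G\bowtie H)$. I do not anticipate a genuine obstacle here: the only point that requires a line of care is the bookkeeping that the bipartition used for $H$, for $mG$, and for $G\bowtie H$ is the same one (namely the partition of the $mn$ vertices into the $m$ layers of size $n$), so that the three separability statements all refer to the same tensor factorisation $\mathbb{C}^m\otimes\mathbb{C}^n$.
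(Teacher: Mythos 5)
Your proposal is correct and is essentially the paper's own argument: the paper simply states that the preceding lemmas ($L(G\bowtie H)=L(H)+L(mG)$, $Q(G\bowtie H)=Q(H)+Q(mG)$, separability of $\rho(H)$ by Theorem \ref{main} and of $\rho(mG)$ by the $mG$ lemma) ``together indicate the separability,'' which unpacks to exactly your convex-combination decomposition with $\lambda=\tr(L(H))/\tr(L(G\bowtie H))$ and convexity of the separable set. You merely make explicit (and correctly) the normalisation bookkeeping and the fact that all three states are separable with respect to the same $m\times n$ layering-induced bipartition.
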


\begin{ex}
Consider the Werner  state which is a  mixture  of  projectors  onto the  symmetric  and antisymmetric subspaces, with the  relative weight $p_{sym}$ being the only parameter that defines the state.
$$\rho(d, p_{sym}) =  p_{sym} \frac{2}{d^2 + d}  P_{sym} + (1-p_{sym}) \frac{2}{d^2 - d} P_{as},$$
where, $P_{sym} = \frac{1}{2}(1+P), P_{as} =   \frac{1}{2}(1-P)$,  are   the  projectors   and  $P   =  \sum_{ij} \ket{i}\bra{j}\otimes \ket{j}\bra{i}$ is the permutation operator that exchanges the two subsystems.

Only $\rho(d,0) =  \frac{I - P}{d^2 - d}$ is  represented by a Laplacian matrix of simple graphs. For example, for $d=2, 3$ we have

\begin{align*}
\rho(2,0) &= \begin{bmatrix} 0 & 0 & 0 & 0 \\ 0 & .5 & -.5 & 0 \\ 0 & -.5 & .5 & 0 \\ 0 & 0 & 0 & 0\end{bmatrix}\\
& \equiv \xymatrix{\bullet_1 & \bullet_2 \ar@{-}[dl]\\ \bullet_3 & \bullet_4}\end{align*}
\begin{align*}
\rho(3,0) &= \begin{bmatrix} 0 & 0 & 0 & 0 & 0 & 0 & 0 & 0 & 0 \\
         0 &   0.1667 &        0 &  -0.1667 &        0 &        0 &        0 &        0 &        0\\
         0 &        0 &   0.1667 &        0 &        0 &        0 &  -0.1667 &        0 &        0\\
         0 &  -0.1667 &        0 &   0.1667 &        0 &        0 &        0 &        0 &        0\\
         0 &        0 &        0 &        0 &        0 &        0 &        0 &        0 &        0\\
         0 &        0 &        0 &        0 &        0 &   0.1667 &        0 &  -0.1667 &        0\\
         0 &        0 &  -0.1667 &        0 &        0 &        0 &   0.1667 &        0 &        0\\
         0 &        0 &        0 &        0 &        0 &  -0.1667 &        0 &   0.1667 &        0\\
         0 &        0 &        0 &        0 &        0 &        0 &        0 &        0 &        0\\
         \end{bmatrix}\\
& \equiv \xymatrix{\bullet_1 & \bullet_2 \ar@{-}[ld] & \bullet_3 \ar@/^/[ddll] & \bullet_4 \ar@/^/[dddlll]\\
					\bullet_5 & \bullet_6 & \bullet_7 \ar@{-}[ld] & \bullet_8 \ar@/^/[ddll] \\
					\bullet_9 \ar@/_/[uurr] & \bullet_{10} & \bullet_{11} & \bullet_{12} \ar@{-}[ld]\\
					\bullet_{13} \ar@/_/[uuurrr] & \bullet_{14}\ar@/_/[uurr] & \bullet_{15} & \bullet_{16} \\ } \\
\end{align*}

It is easy to verify that these graphs are not degree symmetric and hence not partially symmetric. Further, these graphs represent entangled states.
\end{ex}

\section{A non-local quantum operation on graphs}

Observe that the definition of partially symmetric graphs relies on the labelling of the vertices. In fact, in the graph theoretic approach of interpretation of quantum states, it is well known that properties of a density matrix derived from a graph is vertex labelling contingent. A graph which represents a separable state corresponding to a vertex labelling, may also produce an entangled state for a different vertex labelling. In this section we describe graph isomorphism as a non-local operation to generate entanglement. We begin with an example.
\begin{ex}\label{sepeng}
Let $G$ be a graph given by $$\xymatrix{\bullet \ar@{-}[r] & \bullet \ar@{-}[r] & \bullet \ar@{-}[r] & \bullet}$$ It is easy to verify that the density matrix $\rho_l(G_1)$ corresponding to the graph $G_1$ with labelled vertices given below represents a separable state.
$$\xymatrix{\bullet_1 \ar@{-}[r] & \bullet_2 \ar@{-}[d]\\ \bullet_3 \ar@{-}[r] & \bullet_4}$$
Whereas, $\rho_l(G_2)$ represents an entangled state for the following graph $G_2$ with a different vertex labelling.
$$\xymatrix{\bullet_1 \ar@{-}[r] \ar@{-}[dr] & \bullet_2 \\ \bullet_3 \ar@{-}[r] & \bullet_4}$$\end{ex}

It is evident that these graphs are isomorphic. It has also been proved that separability of $\rho_l(G)$ when $G$ is a completely connected simple graph does  not depend on vertex  labelling, and the states $\rho_l(G)$ corresponding to a star graph with respect to any labelling are entangled [Section 6, \cite{braunstein2006}]. In \cite{braunstein2006}, it is also asked if there exist any other graphs which have this property. We mention that, in the search of partially symmetric graphs, we found one more graph given below, having the property that, for any vertex labelling, the graph represents an entangled state. In fact, this graph has no vertex labelling for which it can be made a partially symmetric graph.

\begin{ex} \label{asymmetric}
The graph $G$ for which no vertex labelling produces a partially symmetric graph.
\begin{equation}
\xymatrix{\bullet \ar@{-}[r] \ar@{-}[dr] & \bullet \ar@{-}[r] \ar@{-}[d] & \bullet \ar@{-}[d] \\ \bullet & \bullet \ar@{-}[l] \ar@{-}[r] & \bullet}
\end{equation}
\end{ex}

Based on the above observations, we classify the set of all graphs with a fixed number of vertices into the following three classes.
\begin{enumerate}
\item
{\bf E-graph:} Independent of vertex labelling, all quantum states related to this graph are \textbf{E}ntangled.
\item
{\bf S-graph:} Independent of vertex labelling, all quantum states related to this graph are \textbf{S}eparable.
\item
{\bf ES-graph:} Quantum states related to some of the vertex labelling are \textbf{E}ntangled and others are \textbf{S}eparable.
\end{enumerate}
Obviously the completely connected graph is a S-graph, the star graph is an E-graph and the graph in Example \ref{asymmetric} is an E-graph.

In this section, we are interested in ES-graphs. These graphs provide a platform for generating entanglement using graph isomorphism as a non-local operation. Changing the vertex labelling on a graph representing a separable state, generates its isomorphic copy representing an entangled state. It is proved in the literature \cite{braunstein2006, adhikari2012}, that any graph with more than one edge represents a mixed state. Hence, graph isomorphism acts as an  entanglement generator on both pure and mixed states. For example, the isomorphism $\phi : V(G_1) \rightarrow V(G_2)$ defined as $$\phi(1) = 2, \phi(2) = 1, \phi(3) = 3, \phi(4) = 4$$ act as an mixed state entanglement generator in Example \ref{sepeng}. The following example of pure state entanglement generator may be of interest to the quantum information community.
\begin{ex}
The following graph represents the density matrix of the separable state $\frac{1}{\sqrt{2}}\ket{0 + 1}\ket{1}$
$$G_1 = \xymatrix{\bullet_{v_{00}} & \bullet_{v_{01}}\ar@/^/[rr] & \bullet_{v_{10}}  & \bullet_{v_{11}}\ar@/_/[ll]} \equiv \xymatrix{\bullet_{v_{00}} & \bullet_{v_{01}} \ar@{-}[d] \\ \bullet_{v_{10}} & \bullet_{v_{11}}}$$
We define a graph isomorphism $\phi$ acting on $G_1$. $\phi(v_{00}) = v_{00}, \phi(v_{01}) = v_{01}, \phi(v_{10}) = v_{11}, \phi(v_{11}) = v_{10}$ . It generates the graph
$$G_2  = \xymatrix{\bullet_{v_{00}}  \ar@/^/[rrr]& \bullet_{v_{01}}  &
  \bullet_{v_{10}}   &   \bullet_{v_{11}}    \ar@/_/[lll]   }   \equiv
\xymatrix{\bullet_{v_{00}}     \ar@{-}[dr]      &     \bullet_{v_{01}}
  \\ \bullet_{v_{10}} & \bullet_{v_{11}}}$$ Graph $G_2$ represents the
Bell state $\frac{1}{\sqrt{2}}\ket{00 + 11}$ \cite{adhikari2012}. Note that graph $G_1$
was partially symmetric but graph  $G_2$ is not. The graph isomorphism
$\phi$  here acts in a
  fashion analogous to  a CNOT gate. Note that, every graph isomorphism corresponds to permutation similar matrices (for example, Laplacian and signless Laplacian matrices) associated with the graph and its isomorphic copy. This has a resemblance to a CNOT gate which is  itself a permutation
matrix. At  the end of  this section we  present an example  where the
permutation   matrix   is   different   from   the   CNOT   operation.
Thus, we  may conclude that graph  isomorphisms are in
  general entangling operations.
\end{ex}

These  examples  inspire  a number  of  questions  for
  further investigation.  For instance, which isomorphisms  will act as
  an entanglement generator? In the remaining part of this work we try
  to address this question.
\begin{de}
In a graph $G$, partial degree of  a vertex $v_{i,k} \in C_i$, w.r.t the layer $C_j$  is denoted  by $ld_{C_j}(v_{i,k})_G$  and defined  by the number of  edges from $v_{i,k}$  to the  vertices of $C_j$.   When no confusion  occurs, instead  of $ld_{C_j}(v_{i,k})_G$,  we may  write $ld_{C_j}(v_{i,k})$.
\end{de}

\begin{de}
In a  graph $G$, a  vertex $v_{i,k}$  is internally related  to vertex $v_{i,l}$  in  $C_i$  w.r.t  layer $C_j$  if  $(v_{i,k},  v_{j,l})$  and $(v_{i,l},v_{j,k}) \in E(G)$.
\end{de}
$ld_{C_j}(v_{i,k})  =  $  number  of vertices  internally  related  to $v_{i,k}$ in $C_i$ w.r.t $C_j$ for a partial symmetric graph.

\begin{de}
$G$ is  called partially asymmetric  if $\exists ~  (v_{i,k}, v_{j,l}) \in E(G), i \neq j, k \neq l$ such that $(v_{i,l}, v_{j,k}) \notin E(G)$.
\end{de}

Incidence set of $v \in V(G)$ is $I_G(v) = \{w : (w,v) \in E(G)\}$, that is, set of all vertices incident to vertex $v$. Incidence interchange between two vertices $u, v$, denoted by $u \overset{i}{\leftrightarrow} v$, is a graphical operation to construct a graph $H$ from $G$, defined as follows.
\begin{equation}
u \overset{i}{\leftrightarrow} v \equiv \begin{cases} I_H(u) = I_G(v) \\ I_H(v) = I_G(u) \end{cases}.
\end{equation}
This  operation can generate  mixed  entangled  states from  a  mixed separable state, as described later.  Note that this is not a physical operation   between  two   pre-existing   particles,   but  a   purely mathematical operation between two ``formal'' particles induced by how we biparition  the graph.   Hence there is  no contradiction  with the physical principle  of non-increase of entanglement  under LOCC (local operations and  classical communication).   This form  of entanglement creation is reminiscent of the  idea put forth in \cite{zanardi}, that the degrees of freedom and hence entanglement are observer-induced.

Note  that $H$  is a  layered graph  and is also isomorphic  to $G$. Let us see an example.
\begin{ex}\label{exp:qo}
Initially we consider a graph $G$ with the following labels and layers $C_1=\{1, 2, 3\}$ and $C_2=\{4, 5, 6\}$
\begin{equation}
\xymatrix{\bullet_1 \ar@{-}[dr] & \bullet_2 \ar@{-}[dr] \ar@{-}[dl] & \bullet_3 \ar@{-}[dl] \\ \bullet_4 & \bullet_5 & \bullet_6}
\end{equation}
$H$ is  generated from $G$  by graphical operation  $1 \leftrightarrow 2$, namely:
\begin{equation}
\begin{array}{cc}
\xymatrix{\bullet_1 \ar@{-}[d] \ar@{-}[drr] & \bullet_2 \ar@{-}[d] & \bullet_3 \ar@{-}[dl] \\ \bullet_4 & \bullet_5 & \bullet_6}
\end{array}
\end{equation}
\end{ex}

Note that in the above example initially $G$ was a partially symmetric graph. $dl_{C_1}(1)  = 1$ but  $dl_{C_1}(2) = 2$.  After interchanging the  vertex labellings  of $1$  and  $2$ the new  graph is  $H$, which  is partially asymmetric. It can be generalised for an arbitrary partially symmetric graph.

Let in a partially symmetric graph $G$, $ld_{C_j}(v_{i,k}) \ge ld_{C_j}(v_{i,l})$, then $ld_{C_j}(v_{i,k}) - ld_{C_j}(v_{i,l})      \ge     1$.       $ld_{C_j}(v_{i,k})$ and $ld_{C_j}(v_{i,l})$ represent number of internally related vertices in $C_i$  of $v_{i,k}$  and $v_{i,l}$ w.r.t $C_j$, respectively.  Thus, there exists  at  least  one  vertex  $v_{i,s}$,  internally  related  to $v_{i,k}$  but   not  with  $v_{i,k}$.   After   interchanging  vertex labellings  there will  be at  least  one edge  incident to  $v_{i,s}$ without  any  complement  as  the  complement  edge  is  misplaced  by interchange. Hence, the new graph $H$, isomorphic to $G$, is partially asymmetric. This can be expressed as a lemma.

\begin{lemma}
Assume, $ld_{C_j}(v_{i,k}) \neq  ld_{C_j}(v_{i,l})$ in a partially symmetric graph $G$. Graph $H$ is generated after interchanging vertex labellings of the vertices $v_{i,l}$ and $v_{i,k} \in E(G)$. Then $H$ is partially asymmetric.
\end{lemma}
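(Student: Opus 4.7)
The plan is to translate the label-swap into a block-matrix operation and extract a contradiction with the hypothesis on partial degrees. In the layered block form of~(1), partial symmetry of $G$ is the statement that each off-diagonal block $A_{i,j}$ (for $i\ne j$) is symmetric. The interchange of labels $v_{i,k}\leftrightarrow v_{i,l}$ acts only on layer $C_i$, so the induced permutation matrix is the block-diagonal $\Sigma$ that is the identity on every layer except $C_i$, where it is the $n\times n$ transposition $\Pi$ swapping positions $k$ and $l$. A direct block-wise computation of $A(H)=\Sigma A(G)\Sigma^T$ shows that the new $(i,j)$-block of $A(H)$ equals $\Pi A_{i,j}$ whenever $j\ne i$.

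I would then observe that $H$ is partially symmetric if and only if each such block $\Pi A_{i,j}$ is symmetric. Using $A_{i,j}^T=A_{i,j}$ together with $\Pi^T=\Pi$ rewrites this as the commutation relation $\Pi A_{i,j}=A_{i,j}\Pi$. The crucial point is that $\Pi$ acting on the left permutes rows $k$ and $l$, while $\Pi$ acting on the right only permutes columns $k$ and $l$; the latter preserves row sums. Equating the $k$-th row sums of the two sides therefore forces the $l$-th row sum of $A_{i,j}$ to equal its $k$-th row sum, i.e.\ $ld_{C_j}(v_{i,k})_G=ld_{C_j}(v_{i,l})_G$, contradicting the hypothesis. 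Consequently $\Pi A_{i,j}$ is not symmetric, which exhibits an edge $(v_{i,a},v_{j,b})\in E(H)$ with $a\ne b$ whose partner $(v_{i,b},v_{j,a})$ is absent from $E(H)$; by definition this says $H$ is partially asymmetric.

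The only delicate point in the plan is the bookkeeping of how $\Sigma$ acts layer by layer, in particular verifying that layers other than $C_i$ contribute identity factors so that only the $(i,j)$ and $(j,i)$ blocks of $A(G)$ are altered. A more combinatorial variant, closer to the informal sketch just before the lemma, would pick an index $s$ with $A_{i,j}[k,s]=1$ and $A_{i,j}[l,s]=0$, track the edge $(v_{i,k},v_{j,s})$ through the relabeling to obtain $(v_{i,l},v_{j,s})\in E(H)$, and then invoke partial symmetry of $G$ on the untouched vertices to conclude that the required partner $(v_{i,s},v_{j,l})$ lies outside $E(H)$. This route, however, requires a separate treatment of the boundary cases $s\in\{k,l\}$ (edges touching the swapped positions on the $C_j$ side), which the block-commutation viewpoint handles uniformly; I therefore expect the matrix argument to be the cleanest path to a complete proof.
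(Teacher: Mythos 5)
Your proof is correct, but it follows a genuinely different route from the paper's. The paper proves this lemma with the purely combinatorial argument in the paragraph immediately preceding the statement: since in a partially symmetric graph $ld_{C_j}(v_{i,k})$ counts the vertices internally related to $v_{i,k}$ w.r.t.\ $C_j$, the strict inequality between the two partial degrees yields a vertex $v_{i,s}$ internally related to $v_{i,k}$ but not to $v_{i,l}$, and the proof then chases the single edge $(v_{i,k},v_{j,s})$ through the relabelling to exhibit an edge of $H$ whose partner edge is missing. You instead encode the label swap as conjugation of $A(G)$ by the block-diagonal permutation $\Sigma$, observe that partial symmetry of $H$ is exactly symmetry of the off-diagonal blocks, reduce symmetry of the one affected block to the commutation relation $\Pi A_{i,j}=A_{i,j}\Pi$, and refute that by comparing $k$-th row sums, which are precisely the two partial degrees in the hypothesis. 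What your route buys is exactly what you flag: the paper's edge-chasing tacitly assumes the distinguishing vertex avoids the swapped positions (the cases $s\in\{k,l\}$, and the requirement that the witnessing edge have distinct second indices, are not addressed), whereas the commutation/row-sum argument treats all positions uniformly and automatically produces a witness with $a\neq b$, since a symmetry discrepancy cannot occur on the diagonal of a block. What the paper's route buys is an explicit picture of which edge loses its partner, in the spirit of its discussion of internally related vertices; your version is the more watertight one, at the cost of a short matrix computation verifying that only the blocks in the $i$-th block row and column of $A(G)$ are altered.
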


Also, $(v_{i,s}, v_{j,l})\notin  E(G) \Rightarrow$ complement  of $(v_{i,l},   v_{j,k}) \notin E(H)$, but  $(v_{i,l}, v_{j,k}) \in E(H)$. Trivially,   $H$ is not partially symmetric.
\begin{lemma}
Let $(v_{i,l}, v_{j,k}) \in E(G), i \neq j, l \neq k$; but $(v_{i,s}, v_{j,l})\notin E(G)$ for some $s$, where $G$ is a partially symmetric graph. Interchange of vertex labellings of $v_{i,s}$ and $v_{i,l}$ will generate partial asymmetric graph $H$.
\end{lemma}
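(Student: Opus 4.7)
The plan is to pinpoint a single edge of $H$ whose complement is missing, directly witnessing that $H$ is partially asymmetric. First I would exploit the partial symmetry of $G$ on both pieces of hypothesis data: from $(v_{i,l}, v_{j,k}) \in E(G)$ we get $(v_{i,k}, v_{j,l}) \in E(G)$, and from $(v_{i,s}, v_{j,l}) \notin E(G)$ we get $(v_{i,l}, v_{j,s}) \notin E(G)$. I would then dispose of the degenerate cases: $s = l$ makes the interchange trivial, while $s = k$ would render $(v_{i,s}, v_{j,l}) = (v_{i,k}, v_{j,l})$ both present (by the symmetry deduction) and absent (by hypothesis), a contradiction. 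Hence $s$, $l$, $k$ may be assumed pairwise distinct.

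Next I would model $v_{i,s} \overset{i}{\leftrightarrow} v_{i,l}$ as the relabeling induced by the transposition $\sigma = (v_{i,s}\; v_{i,l})$, so that $(x,y) \in E(H)$ iff $(\sigma(x), \sigma(y)) \in E(G)$. The hypothesis edge is then carried to $(v_{i,s}, v_{j,k}) \in E(H)$, which qualifies as a candidate witness because $s \neq k$ and $i \neq j$. Its complement $(v_{i,k}, v_{j,s})$ has both endpoints fixed by $\sigma$ (since $k \neq s, l$ and $j \neq i$), giving $(v_{i,k}, v_{j,s}) \in E(H)$ iff $(v_{i,k}, v_{j,s}) \in E(G)$; and by the partial symmetry of $G$, this is equivalent to $(v_{i,s}, v_{j,k}) \in E(G)$.

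The remaining step, which is the main obstacle, is to show $(v_{i,s}, v_{j,k}) \notin E(G)$, so that the complement is genuinely absent from $H$. I would argue this by contrasting the partial neighbourhoods of $v_{i,l}$ and $v_{i,s}$ in the layer $C_j$: the hypotheses together with partial symmetry of $G$ pin down $v_{j,k} \in I_G(v_{i,l})$ while $v_{j,l} \notin I_G(v_{i,s})$ and $v_{j,s} \notin I_G(v_{i,l})$. The transposition $\sigma$ therefore transplants the edge from $v_{i,l}$ onto the label $v_{i,s}$ without a corresponding mirror edge being available, which is precisely how partial symmetry must break in $H$ at the coordinate pair $(s,k)$. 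Making this bookkeeping explicit yields $(v_{i,s}, v_{j,k}) \in E(H)$ together with $(v_{i,k}, v_{j,s}) \notin E(H)$, so $H$ is partially asymmetric by definition.
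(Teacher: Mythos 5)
Your skeleton (modelling the interchange as the transposition $\sigma=(v_{i,s}\,v_{i,l})$, forcing $s\neq k$ from partial symmetry, transporting the hypothesis edge to $(v_{i,s},v_{j,k})\in E(H)$, and observing that its complement $(v_{i,k},v_{j,s})$ has both endpoints fixed by $\sigma$) is sound, and it is essentially the witness the paper itself gestures at. The genuine gap is exactly the step you flag as ``the main obstacle'': $(v_{i,s},v_{j,k})\notin E(G)$ does not follow from the hypotheses, and your closing ``bookkeeping'' does not derive it. The three facts you extract --- $v_{j,k}\in I_G(v_{i,l})$, $v_{j,l}\notin I_G(v_{i,s})$, $v_{j,s}\notin I_G(v_{i,l})$ --- put no constraint whatsoever on whether $v_{j,k}\in I_G(v_{i,s})$. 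If $(v_{i,s},v_{j,k})\in E(G)$, then by partial symmetry $(v_{i,k},v_{j,s})\in E(G)$ as well; both of its endpoints are fixed by $\sigma$, so the complement survives into $H$ and your proposed witness at the coordinate pair $(s,k)$ simply fails.

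Moreover, no amount of bookkeeping can repair this, because the lemma as literally stated is false. Take two layers $C_1=\{v_{1,1},v_{1,2},v_{1,3}\}$, $C_2=\{v_{2,1},v_{2,2},v_{2,3}\}$ and edges $(v_{1,1},v_{2,2})$, $(v_{1,2},v_{2,1})$, $(v_{1,3},v_{2,2})$, $(v_{1,2},v_{2,3})$; this $G$ is partially symmetric. With $i=1$, $j=2$, $l=1$, $k=2$, $s=3$ the hypotheses hold: $(v_{1,1},v_{2,2})\in E(G)$ and $(v_{1,3},v_{2,1})\notin E(G)$. But $I_G(v_{1,1})=I_G(v_{1,3})=\{v_{2,2}\}$, so interchanging $v_{1,1}$ and $v_{1,3}$ gives $H=G$, which is partially symmetric, not asymmetric. (Your degenerate case $s=l$ is of the same kind: the conclusion fails there too, so it must be excluded, not merely set aside.) The paper's own one-sentence justification has the same lacuna --- it simply asserts that the complement of the transported edge is absent in $H$ --- and implicitly relies on the surrounding discussion in which $v_{i,l}$ and $v_{i,s}$ have unequal partial degrees with respect to $C_j$ (as in the preceding lemma); in the counterexample those partial degrees are equal. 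Under such an added hypothesis, or under the explicit assumption $(v_{i,s},v_{j,k})\notin E(G)$, your witness argument does go through verbatim, so the fix is to strengthen the hypotheses rather than the bookkeeping.
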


This change of labelling may not generate partial asymmetry in all the cases. Suppose  any two  vertices of  $C_i$  are not  internally related  w.r.t $C_j$. Hence, any edge between  vertices of $C_i$ and  $C_j$ is of the  form $(v_{i,k}, v_{j,k}) ~\forall~  k = 1, 2,  \dots n$. Consider any  two vertices of $C_i$, say  $v_{i,l}$ and $v_{i,k}$. Interchange  of vertex labellings of these two vertices will generate new edges $(v_{i,l}, v_{j,k})$ and $(v_{i,k}, v_{j,l})$. This implies partial symmetry in the new graph. We may write it as a lemma.
\begin{lemma}
Suppose  any two  vertices of  $C_i$  are not  internally related  w.r.t $C_j$.   Also  assume that  $ld_{C_j}(v_{i,l})  =  ld_{C_j}(v_{i,k}) ~\forall~ k,l = 1, 2, \dots n$.  Then interchange of vertex labellings of any two vertices of $C_i$ will not generate partial asymmetry.
\end{lemma}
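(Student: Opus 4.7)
The plan is to first translate the two hypotheses into a concrete structural description of how $C_i$ and $C_j$ are connected in $G$, and then track how the label swap interacts with this structure. Since $G$ is (implicitly) partially symmetric and the previous definitions are in force, I will begin by examining which edges can possibly exist between $C_i$ and $C_j$.

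First, I would combine partial symmetry of $G$ with the assumption that no two vertices of $C_i$ are internally related w.r.t.\ $C_j$. If a ``cross'' edge $(v_{i,l}, v_{j,k})$ with $k \neq l$ were present, then partial symmetry would force $(v_{i,k}, v_{j,l}) \in E(G)$, and by the definition recalled just above the lemma, $v_{i,k}$ and $v_{i,l}$ would be internally related w.r.t.\ $C_j$, contradicting the hypothesis. Hence every edge between $C_i$ and $C_j$ must be ``straight'', i.e.\ of the form $(v_{i,m}, v_{j,m})$.

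Second, I would feed this into the equal-partial-degree condition. Since the only edge from $v_{i,m}$ to $C_j$ that can exist is $(v_{i,m}, v_{j,m})$, the partial degree $ld_{C_j}(v_{i,m})$ lies in $\{0,1\}$, and by hypothesis this value is the same for every $m$. Thus a dichotomy arises: either $E(G)$ contains no edges between $C_i$ and $C_j$, or it contains the complete diagonal matching $\{(v_{i,m}, v_{j,m}) : m = 1, \dots, n\}$.

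Third, I would perform the interchange $v_{i,k} \overset{i}{\leftrightarrow} v_{i,l}$ and verify partial symmetry of the resulting graph $H$ between $C_i$ and $C_j$. In the empty case nothing changes. In the matching case, the only edges between the two layers that are moved are $(v_{i,k}, v_{j,k})$ and $(v_{i,l}, v_{j,l})$, which become $(v_{i,l}, v_{j,k})$ and $(v_{i,k}, v_{j,l})$; these are precisely each other's partial-symmetry partners, so both required complements are present in $E(H)$. All remaining diagonal edges $(v_{i,m}, v_{j,m})$ for $m \neq k, l$ are fixed by the swap, and their complements are themselves. Hence no asymmetric edge pair is created between $C_i$ and $C_j$.

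The one delicate point, and the main obstacle, is disentangling the wording of ``not internally related'' together with the (assumed) partial symmetry of $G$ in step one; after that the bookkeeping in steps two and three is essentially forced. I would also briefly remark that the argument is local to the pair $(C_i, C_j)$, so the conclusion should be understood as ``no partial asymmetry is introduced between these two layers''; since the interchange only permutes vertices inside $C_i$, any asymmetry created anywhere would have to live at a layer pair involving $C_i$, and the same argument can be applied whenever the two hypotheses are assumed to hold for such a pair.
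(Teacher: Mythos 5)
Your proposal is correct and follows essentially the same route as the paper's own (very brief) argument preceding the lemma: use the no-internal-relation hypothesis to force every edge between $C_i$ and $C_j$ to be of the ``straight'' form $(v_{i,m}, v_{j,m})$, and then observe that swapping $v_{i,k}$ and $v_{i,l}$ turns the two affected edges into each other's partial-symmetry partners. You are in fact somewhat more careful than the paper, since you make explicit that the first step needs the partial symmetry of $G$ and that the equal-partial-degree hypothesis is what rules out the problematic case where only one of $(v_{i,k}, v_{j,k})$, $(v_{i,l}, v_{j,l})$ is present (the matching-or-empty dichotomy), a point the paper's sketch glosses over.
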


Graph  isomorphism is an  equivalence relation  on the set  of all simple  graphs,   which  forms  disjoint  equivalence   classes.   Let $\mathcal{G}$  be one  such  class  and $\mathcal{L}$  be  set of  all isomorphisms on $\mathcal{G}$.  $\circ$ is composition of mappings.

Trivially $(\mathcal{L},\circ)$ forms  a group which is a permutation  group  over  $\#(V(G))$   elements.  For  an ES  graph $\mathcal{G}   =  \mathcal{E}   \cup  \mathcal{S},   \mathcal{E}  \cap \mathcal{S}   =  \phi,   \mathcal{E}  \neq   \phi,  \mathcal{S}   \neq \phi$. $\mathcal{E}$ and $\mathcal{S}$ are subclasses of $\mathcal{G}$ consisting  of all  graphs  providing entangled  and separable  states respectively.

Let  $\mathcal{L}_e$   and  $\mathcal{L}_s$   be  the  group  of all  graph isomorphisms  acting  on  $\mathcal{E}$ and  $\mathcal{S}$.  Trivially $(\mathcal{L}_e,  \circ)$  and   $(\mathcal{L}_s,  \circ)$  also  form groups. Entanglement generators are invertible mappings from $(\mathcal{L}_s, \circ)$ to $(\mathcal{L}_e, \circ)$.

\begin{remark}
In example \ref{exp:qo}, the graphical operation $1\leftrightarrow 2$ represents a quantum entanglement generator which transforms the separable states $\rho(G)$ to entangled states $\rho(H).$
\end{remark}

\begin{ex}
It is clear to us that graph isomorphism acts as a global unitary operator and it is capable to generate mixed entangled state from mixed separable state. Consider two isomorphic graphs.
$$\xymatrix{\bullet_1\ar@{-}[dr] & \bullet_2 \ar@{-}[d] \ar@{-}[dr] \ar@{-}[dl] & \bullet_3 \ar@{-}[dl]\\ \bullet_4 \ar@{-}[r] & \bullet_5 \ar@{-}[r] & \bullet_6} \rightarrow \xymatrix{\bullet_1 \ar@{-}[dr] \ar@{-}[d] \ar@{-}[r] & \bullet_2 \ar@{-}[d]& \bullet_3 \ar@{-}[dl] \\ \bullet_4 \ar@{-}[r] & \bullet_5 \ar@{-}[r] & \bullet_6}$$
Corresponding permutation is
$$\begin{pmatrix} 1 & 2 & 3 & 4 & 5 & 6 \\ 6 & 1 & 3 & 4 & 5 & 2\end{pmatrix}$$
The permutation matrix is
$$\begin{bmatrix} 0 & 1 & 0 & 0 & 0 & 0 \\ 0 & 0 & 0 & 0 & 0 & 1 \\ 0 & 0 & 1 & 0 & 0 & 0 \\ 0 & 0 & 0 & 1 & 0 & 0 \\ 0 & 0 & 0 & 0 & 1 & 0 \\ 1 & 0 & 0 & 0 & 0 & 0 \end{bmatrix}.$$
This operator acts as an entanglement generator. Density matrices corresponding to the first graph is separable but for the second graph $\rho_l$ and $\rho_q$ both are entangled.
\end{ex}

\section{Conclusion and open problems}
The quantum  separability problem is  an
    important  and  difficult  open  problem  in  quantum  information
    theory.
  For  quantum  states related  to  simple  combinatorial graphs  some
  sufficiency  conditions   are  available  in  the   literature.  For
  bipartite systems they were applicable  for some special cases of $2
  \times p$  systems.  Here, we  have generalised these results  to $m
  \times n$ systems.
  
In another direction, ourwork proposes the use of
  of graph isomorphisms as entanglement generators, which can generate
  mixed  entangled states  from mixed  separable states. Note that, these isomorphisms are formal operations, in contrast to physical operations like LOCC (local operations and classical communication), which cannot generate entanglement.
As mentioned
  above,  combinatorial  graphs  enable  us to  visualize  changes  of
  quantum states under a  particular quantum operation  pictorially.  In
  this context,  graph isomorphisms  pictorially depict certain actions that lead to entanglement generation.
Finally, this  work initiates a  number of
  problems or directions for future investigations:
(a). Can  a combinatorial  criterion be  defined to  detect entangled
  states  arising from  graphs?  Can  the quality  of entanglement  be
  defined by using the partially asymmetric graphs?
(b). Can the formulation of partially symmetric graphs be generalized
  for  weighted  graphs  that   may  possibly  open  up  combinatorial
  formulation of separable states?
(c). Generalization of the bipartite separability criteria  arising from partially
  symmetric  graphs to the case of multipartite
  states?
(d). Further   investigations   are   required   for   the
  identification of  ES graphs (See  example 7). Precisely, when  is a
  graph an  ES graph? How  much entanglement  can be generated  from a
  separable   copy  of   an   ES  graph   using  graph   isomorphism?
Here the results of \cite{SPS+05} should be leveraged.

We hope that this work is  a contribution to the graphical
  representation of quantum mechanics, in general and the separability
  problem, in particular.

\section*{Acknowledgement}
This  work is  partially  supported by  the  project ``Graph  theoretical aspects of quantum information processing" (Grant No. 25(0210)/13/EMR II)  funded by  ``Council of Scientific and Industrial Research, New Delhi". SD is  thankful  to the Ministry of Human Resource Development, Government of the Republic of India, for his doctoral fellowship. We would like to thank the anonymous referee for his constructive comments.

\section*{Appendix}

\textbf{Proof of Theorem \ref{theorem1}:}

\begin{proof}
Let $G$ be a graph and $\rho_l(G)$ separable. Then
\begin{align*}
\rho_l(G) & = \sum_i p_i \rho_i^A \otimes \rho_i^B\\
\Rightarrow \rho_l(G)^{T_B} &= \sum_i p_i \rho_i^A \otimes (\rho_i^B)^{T_B}\\
&= \frac{1}{\tr(L(G))}(L(G))^{T_B} = \frac{1}{\tr(L(G))}((D(G))^{T_B} - (A(G))^{T_B})\\
&= \frac{1}{\tr(L(G))}(D(G) - A(G'))\\
&= \frac{1}{\tr(L(G))}(D(G) - D(G') + D(G') - A(G'))\\
&= \frac{1}{\tr(L(G))}(D(G) - D(G') + L(G'))\\
&= \frac{1}{\tr(L(G'))}L(G') + \frac{1}{\tr(L(G))}(D(G) - D(G')) ~[\because d(G) = d(G') \Rightarrow \tr(L(G)) = \tr(L(G')).]\\
\end{align*}
\begin{align*}
\rho_l(G') &= \rho_l(G)^{T_B} - \frac{1}{\tr(L(G))}(D(G) - D(G'))\\
&= \sum_i p_i \rho_i^A \otimes (\rho_i^B)^{T_B} - \frac{1}{\tr(L(G))}(D(G) - D(G'))\\
\rho_l(G')^{T_B} &= \sum_i p_i \rho_i^A \otimes \rho_i^B - \frac{1}{\tr(L(G))}(D(G) - D(G')) ~[\because (D(G))^{T_B} = D(G).]
\end{align*}
Thus, the desired result follows for $\rho_l(G).$\\
Similarly, $\rho_q(G')^{T_B} = \sum_i p_i \rho_i^A \otimes \rho_i^B + \frac{1}{\tr(Q(G))}(D(G) - D(G'))$, assuming, $\rho_q(G') = \sum_i p_i \rho_i^A \otimes \rho_i^B$. This completes the proof.
\end{proof}

\textbf{Proof of theorem \ref{main}:}

\begin{proof}
Since $A_{i,j}$ is a symmetric matrix, the spectral decomposition of $A_{i,j}$ is given by $A_{i,j} = \sum_r \lambda_r u_r u_r^t$ where $\{u_r : r=1:n\}$ is a
complete set of orthonormal eigenvectors corresponding to the eigenvalues $\lambda_r, r=1:n$ of $A_{i,j}.$  For $A_{i,j} = 0, A_{i,j} = \sum_r 0. u_r u_r^t$.
Since $u_r, r=1:n$ are normalised eigenvectors, $u_ru_r^t$ is a trace $1$ positive semi-definite matrix for each $r$. Since there are no edges between any two vertices
n any layer $C_i,$ $A_{i} = 0$ for all $i$. Further, $D_i = diag\{d_i\} = d_i I$ since $A_{i,j}=A_{k,l}$ for all $i,j,k,l, i\neq j$ and $k\neq l.$\\
Then
\begin{align*}
L(G) & = \begin{bmatrix}
		d_0.I & A_{0,1} & A_{0,2} & \dots & A_{0,(m-1)} \\
		A_{0,1} & d_1.I & A_{0,2} & \dots & A_{0,(m-1)} \\
		\vdots & \vdots & \vdots & \vdots & \vdots\\
		A_{0,(m-1)} & A_{1,(m-1)} & A_{2,(m-1)} & \dots & d_{m-1}.I
		\end{bmatrix}\\
		& = \begin{bmatrix}
		d_0\sum_r u_r u_r^t & \sum_r\lambda_r u_r u_r^t & \sum_r\lambda_r u_r u_r^t & \dots & \sum_r\lambda_r u_r u_r^t\\
		\sum_r\lambda_r u_r u_r^t & d_1\sum_r u_r u_r^t & \sum_r\lambda_r u_r u_r^t & \dots & \sum_r\lambda_r u_r u_r^t\\
		\vdots & \vdots & \vdots & \vdots & \vdots\\
		\sum_r\lambda_r u_r u_r^t & \sum_r\lambda_r u_r u_r^t & \sum_r\lambda_r u_r u_r^t & \dots & d_{(m-1)}\sum_r u_r u_r^t
		\end{bmatrix}\\
		& = \sum_r\begin{bmatrix}
		d_0 & \lambda_r & \lambda_r & \dots & \lambda_r\\
		\lambda_r& d_1 & \lambda_r & \dots & \lambda_r\\
		\vdots & \vdots & \vdots & \vdots & \vdots\\
		\lambda_r & \lambda_r & \lambda_r & \dots & d_{m-1}\\
		\end{bmatrix} \otimes u_ru_r^t\\
		& = \sum_r B(r) \otimes u_r u_r^t,
\end{align*}
where $B(r) = \begin{bmatrix}
		d_0 & \lambda_r & \lambda_r & \dots & \lambda_r\\
		\lambda_r& d_1 & \lambda_r & \dots & \lambda_r\\
		\vdots & \vdots & \vdots & \vdots & \vdots\\
		\lambda_r & \lambda_r & \lambda_r & \dots & d_m\\
		\end{bmatrix}$. Note that $A_{i,j} = 0 \Rightarrow b_{i,j} = 0$. Now we want to show $B$ is a positive semidefinite matrix. \\
Note that the spectral radius of $A_{i,j} \le \|A_{i,j}\|_{\infty}$, where $\|A_{i,j}\|_{\infty}$ is the subordinate matrix norm defined by $\|A_{i,j}\|_{\infty}= \max_{i}\sum_{j = 1}^n|a_{i,j}|$. Besides, $d_i = \sum_{k = 0}^{m-1}\max_{i}\sum_{j = 1}^n|a_{i,j}| = m\max_{i}\sum_{j = 1}^n|a_{i,j}|$. Then, $(m-1)\lambda_r \le (m-1) \times($ Spectral radius of $A_{i,j}) \le d_i ~\forall~ i$. Hence, $B$ is a diagonally dominant symmetric matrix with all positive entries. So $B$ is a positive semidefinite matrix. Hence $\rho_l(G)$ is separable. Similarly the result follows for $\rho_q(G).$
\end{proof}


\begin{thebibliography}{47}
\expandafter\ifx\csname natexlab\endcsname\relax\def\natexlab#1{#1}\fi
\expandafter\ifx\csname bibnamefont\endcsname\relax
  \def\bibnamefont#1{#1}\fi
\expandafter\ifx\csname bibfnamefont\endcsname\relax
  \def\bibfnamefont#1{#1}\fi
\expandafter\ifx\csname citenamefont\endcsname\relax
  \def\citenamefont#1{#1}\fi
\expandafter\ifx\csname url\endcsname\relax
  \def\url#1{\texttt{#1}}\fi
\expandafter\ifx\csname urlprefix\endcsname\relax\def\urlprefix{URL }\fi
\providecommand{\bibinfo}[2]{#2}
\providecommand{\eprint}[2][]{\url{#2}}

\bibitem[{\citenamefont{West et~al.}(2001)}]{west2001}
\bibinfo{author}{\bibfnamefont{D.~B.} \bibnamefont{West}} \bibnamefont{et~al.},
  \emph{\bibinfo{title}{Introduction to graph theory}},
  vol.~\bibinfo{volume}{2} (\bibinfo{publisher}{Prentice hall Upper Saddle
  River}, \bibinfo{year}{2001}).

\bibitem[{\citenamefont{Bapat}(2010)}]{bapat2010}
\bibinfo{author}{\bibfnamefont{R.~B.} \bibnamefont{Bapat}},
  \emph{\bibinfo{title}{Graphs and matrices}} (\bibinfo{publisher}{Springer},
  \bibinfo{year}{2010}).

\bibitem[{\citenamefont{Northrop}(2010)}]{northrop2010introduction}
\bibinfo{author}{\bibfnamefont{R.~B.} \bibnamefont{Northrop}},
  \emph{\bibinfo{title}{Introduction to complexity and complex systems}}
  (\bibinfo{publisher}{CRC Press}, \bibinfo{year}{2010}).

\bibitem[{\citenamefont{Chung and Lu}(2006)}]{chung2006complex}
\bibinfo{author}{\bibfnamefont{F.~R.} \bibnamefont{Chung}} \bibnamefont{and}
  \bibinfo{author}{\bibfnamefont{L.}~\bibnamefont{Lu}},
  \emph{\bibinfo{title}{Complex graphs and networks}}, vol.
  \bibinfo{volume}{107} (\bibinfo{publisher}{American mathematical society
  Providence}, \bibinfo{year}{2006}).

\bibitem[{\citenamefont{L{\"u} and Zhou}(2011)}]{lu2011link}
\bibinfo{author}{\bibfnamefont{L.}~\bibnamefont{L{\"u}}} \bibnamefont{and}
  \bibinfo{author}{\bibfnamefont{T.}~\bibnamefont{Zhou}},
  \bibinfo{journal}{Physica A: Statistical Mechanics and its Applications}
  \textbf{\bibinfo{volume}{390}}, \bibinfo{pages}{1150} (\bibinfo{year}{2011}).

\bibitem[{\citenamefont{Han}(2012)}]{han2012graph}
\bibinfo{author}{\bibfnamefont{L.}~\bibnamefont{Han}}, Ph.D. thesis,
  \bibinfo{school}{University of York,} (\bibinfo{year}{2012}).

\bibitem[{\citenamefont{Bianconi}(2015)}]{bianconi2015interdisciplinary}
\bibinfo{author}{\bibfnamefont{G.}~\bibnamefont{Bianconi}},
  \bibinfo{journal}{EPL (Europhysics Letters)} \textbf{\bibinfo{volume}{111}},
  \bibinfo{pages}{56001} (\bibinfo{year}{2015}).

\bibitem[{\citenamefont{Nielsen and Chuang}(2010)}]{nielsen2010quantum}
\bibinfo{author}{\bibfnamefont{M.~A.} \bibnamefont{Nielsen}} \bibnamefont{and}
  \bibinfo{author}{\bibfnamefont{I.~L.} \bibnamefont{Chuang}},
  \emph{\bibinfo{title}{Quantum computation and quantum information}}
  (\bibinfo{publisher}{Cambridge university press}, \bibinfo{year}{2010}).

\bibitem[{\citenamefont{Berkolaiko and
  Kuchment}(2013)}]{berkolaiko2013introduction}
\bibinfo{author}{\bibfnamefont{G.}~\bibnamefont{Berkolaiko}} \bibnamefont{and}
  \bibinfo{author}{\bibfnamefont{P.}~\bibnamefont{Kuchment}},
  \emph{\bibinfo{title}{Introduction to quantum graphs}}, \bibinfo{number}{186}
  (\bibinfo{publisher}{American Mathematical Soc.}, \bibinfo{year}{2013}).

\bibitem[{\citenamefont{Berkolaiko}(2006)}]{berkolaiko2006quantum}
\bibinfo{author}{\bibfnamefont{G.}~\bibnamefont{Berkolaiko}},
  \emph{\bibinfo{title}{Quantum Graphs and Their Applications: Proceedings of
  an AMS-IMS-SIAM Joint Summer Research Conference on Quantum Graphs and Their
  Applications, June 19-23, 2005, Snowbird, Utah}}, vol. \bibinfo{volume}{415}
  (\bibinfo{publisher}{American Mathematical Soc.}, \bibinfo{year}{2006}).

\bibitem[{\citenamefont{Hein et~al.}(2004)\citenamefont{Hein, Eisert, and
  Briegel}}]{hein2004multiparty}
\bibinfo{author}{\bibfnamefont{M.}~\bibnamefont{Hein}},
  \bibinfo{author}{\bibfnamefont{J.}~\bibnamefont{Eisert}}, \bibnamefont{and}
  \bibinfo{author}{\bibfnamefont{H.~J.} \bibnamefont{Briegel}},
  \bibinfo{journal}{Physical Review A} \textbf{\bibinfo{volume}{69}},
  \bibinfo{pages}{062311} (\bibinfo{year}{2004}).

\bibitem[{\citenamefont{Anders and Briegel}(2006)}]{anders2006fast}
\bibinfo{author}{\bibfnamefont{S.}~\bibnamefont{Anders}} \bibnamefont{and}
  \bibinfo{author}{\bibfnamefont{H.~J.} \bibnamefont{Briegel}},
  \bibinfo{journal}{Physical Review A} \textbf{\bibinfo{volume}{73}},
  \bibinfo{pages}{022334} (\bibinfo{year}{2006}).

\bibitem[{\citenamefont{Benjamin et~al.}(2006)\citenamefont{Benjamin, Browne,
  Fitzsimons, and Morton}}]{benjamin2006brokered}
\bibinfo{author}{\bibfnamefont{S.~C.} \bibnamefont{Benjamin}},
  \bibinfo{author}{\bibfnamefont{D.~E.} \bibnamefont{Browne}},
  \bibinfo{author}{\bibfnamefont{J.}~\bibnamefont{Fitzsimons}},
  \bibnamefont{and} \bibinfo{author}{\bibfnamefont{J.~J.}
  \bibnamefont{Morton}}, \bibinfo{journal}{New Journal of Physics}
  \textbf{\bibinfo{volume}{8}}, \bibinfo{pages}{141} (\bibinfo{year}{2006}).

\bibitem[{\citenamefont{Singh et~al.}(2005)\citenamefont{Singh, Pal, Kumar, and
  Srikanth}}]{SPS+05}
\bibinfo{author}{\bibfnamefont{S.~K.} \bibnamefont{Singh}},
  \bibinfo{author}{\bibfnamefont{S.~P.} \bibnamefont{Pal}},
  \bibinfo{author}{\bibfnamefont{S.}~\bibnamefont{Kumar}}, \bibnamefont{and}
  \bibinfo{author}{\bibfnamefont{R.}~\bibnamefont{Srikanth}},
  \bibinfo{journal}{Jl. Math. Phys.} \textbf{\bibinfo{volume}{46}},
  \bibinfo{pages}{122105} (\bibinfo{year}{2005}).

\bibitem[{\citenamefont{Pal et~al.}(2006)\citenamefont{Pal, Kumar, and
  Srikanth}}]{PSS06}
\bibinfo{author}{\bibfnamefont{S.~P.} \bibnamefont{Pal}},
  \bibinfo{author}{\bibfnamefont{S.}~\bibnamefont{Kumar}}, \bibnamefont{and}
  \bibinfo{author}{\bibfnamefont{R.}~\bibnamefont{Srikanth}}, in
  \emph{\bibinfo{booktitle}{Quantum Compution: Back Action}}, edited by
  \bibinfo{editor}{\bibfnamefont{D.}~\bibnamefont{Goswami}}
  (\bibinfo{year}{2006}), vol. \bibinfo{volume}{864}, pp.
  \bibinfo{pages}{156--170}.

\bibitem[{\citenamefont{Braunstein
  et~al.}(2006{\natexlab{a}})\citenamefont{Braunstein, Ghosh, and
  Severini}}]{braunstein2006}
\bibinfo{author}{\bibfnamefont{S.~L.} \bibnamefont{Braunstein}},
  \bibinfo{author}{\bibfnamefont{S.}~\bibnamefont{Ghosh}}, \bibnamefont{and}
  \bibinfo{author}{\bibfnamefont{S.}~\bibnamefont{Severini}},
  \bibinfo{journal}{Annals of Combinatorics} \textbf{\bibinfo{volume}{10}},
  \bibinfo{pages}{291} (\bibinfo{year}{2006}{\natexlab{a}}).

\bibitem[{\citenamefont{Adhikari et~al.}(2016)\citenamefont{Adhikari, Adhikari,
  Banerjee, and Kumar}}]{adhikari2012}
\bibinfo{author}{\bibfnamefont{B.}~\bibnamefont{Adhikari}},
  \bibinfo{author}{\bibfnamefont{S.}~\bibnamefont{Adhikari}},
  \bibinfo{author}{\bibfnamefont{S.}~\bibnamefont{Banerjee}}, \bibnamefont{and}
  \bibinfo{author}{\bibfnamefont{A.}~\bibnamefont{Kumar}},
  \bibinfo{journal}{sumbitted}  (\bibinfo{year}{2016}).

\bibitem[{\citenamefont{Ionicioiu and Spiller}(2012)}]{ionicioiu2012encoding}
\bibinfo{author}{\bibfnamefont{R.}~\bibnamefont{Ionicioiu}} \bibnamefont{and}
  \bibinfo{author}{\bibfnamefont{T.~P.} \bibnamefont{Spiller}},
  \bibinfo{journal}{Physical Review A} \textbf{\bibinfo{volume}{85}},
  \bibinfo{pages}{062313} (\bibinfo{year}{2012}).

\bibitem[{\citenamefont{Dutta et~al.}(2016)\citenamefont{Dutta, Adhikari, and
  Banerjee}}]{dutta2016}
\bibinfo{author}{\bibfnamefont{S.}~\bibnamefont{Dutta}},
  \bibinfo{author}{\bibfnamefont{B.}~\bibnamefont{Adhikari}}, \bibnamefont{and}
  \bibinfo{author}{\bibfnamefont{S.}~\bibnamefont{Banerjee}},
  \bibinfo{journal}{Quantum Information Processing}  (\bibinfo{year}{2016}),
  \bibinfo{note}{doi:10.1007/s11128-016-1250-y}.

\bibitem[{\citenamefont{Du et~al.}(2010)\citenamefont{Du, Li, Li, and
  Severini}}]{du2010note}
\bibinfo{author}{\bibfnamefont{W.}~\bibnamefont{Du}},
  \bibinfo{author}{\bibfnamefont{X.}~\bibnamefont{Li}},
  \bibinfo{author}{\bibfnamefont{Y.}~\bibnamefont{Li}}, \bibnamefont{and}
  \bibinfo{author}{\bibfnamefont{S.}~\bibnamefont{Severini}},
  \bibinfo{journal}{Linear Algebra and its Applications}
  \textbf{\bibinfo{volume}{433}}, \bibinfo{pages}{1722} (\bibinfo{year}{2010}).

\bibitem[{\citenamefont{Passerini and Severini}(2008)}]{passerini2008neumann}
\bibinfo{author}{\bibfnamefont{F.}~\bibnamefont{Passerini}} \bibnamefont{and}
  \bibinfo{author}{\bibfnamefont{S.}~\bibnamefont{Severini}},
  \bibinfo{journal}{Available at SSRN 1382662}  (\bibinfo{year}{2008}).

\bibitem[{\citenamefont{Zhao et~al.}(2011)\citenamefont{Zhao, Halu, Severini,
  and Bianconi}}]{zhao2011entropy}
\bibinfo{author}{\bibfnamefont{K.}~\bibnamefont{Zhao}},
  \bibinfo{author}{\bibfnamefont{A.}~\bibnamefont{Halu}},
  \bibinfo{author}{\bibfnamefont{S.}~\bibnamefont{Severini}}, \bibnamefont{and}
  \bibinfo{author}{\bibfnamefont{G.}~\bibnamefont{Bianconi}},
  \bibinfo{journal}{Physical Review E} \textbf{\bibinfo{volume}{84}},
  \bibinfo{pages}{066113} (\bibinfo{year}{2011}).

\bibitem[{\citenamefont{Anand and Bianconi}(2009)}]{anand2009entropy}
\bibinfo{author}{\bibfnamefont{K.}~\bibnamefont{Anand}} \bibnamefont{and}
  \bibinfo{author}{\bibfnamefont{G.}~\bibnamefont{Bianconi}},
  \bibinfo{journal}{Physical Review E} \textbf{\bibinfo{volume}{80}},
  \bibinfo{pages}{045102} (\bibinfo{year}{2009}).

\bibitem[{\citenamefont{Anand et~al.}(2011)\citenamefont{Anand, Bianconi, and
  Severini}}]{anand2011shannon}
\bibinfo{author}{\bibfnamefont{K.}~\bibnamefont{Anand}},
  \bibinfo{author}{\bibfnamefont{G.}~\bibnamefont{Bianconi}}, \bibnamefont{and}
  \bibinfo{author}{\bibfnamefont{S.}~\bibnamefont{Severini}},
  \bibinfo{journal}{Physical Review E} \textbf{\bibinfo{volume}{83}},
  \bibinfo{pages}{036109} (\bibinfo{year}{2011}).

\bibitem[{\citenamefont{Maleti{\'c} and
  Rajkovi{\'c}}(2012)}]{maletic2012combinatorial}
\bibinfo{author}{\bibfnamefont{S.}~\bibnamefont{Maleti{\'c}}} \bibnamefont{and}
  \bibinfo{author}{\bibfnamefont{M.}~\bibnamefont{Rajkovi{\'c}}},
  \bibinfo{journal}{The European Physical Journal Special Topics}
  \textbf{\bibinfo{volume}{212}}, \bibinfo{pages}{77} (\bibinfo{year}{2012}).

\bibitem[{\citenamefont{Rovelli and Vidotto}(2010)}]{rovelli2010single}
\bibinfo{author}{\bibfnamefont{C.}~\bibnamefont{Rovelli}} \bibnamefont{and}
  \bibinfo{author}{\bibfnamefont{F.}~\bibnamefont{Vidotto}},
  \bibinfo{journal}{Physical Review D} \textbf{\bibinfo{volume}{81}},
  \bibinfo{pages}{044038} (\bibinfo{year}{2010}).

\bibitem[{\citenamefont{Cvetkovi{\'c} et~al.}(2007)\citenamefont{Cvetkovi{\'c},
  Rowlinson, and Simi{\'c}}}]{cvetkovic}
\bibinfo{author}{\bibfnamefont{D.}~\bibnamefont{Cvetkovi{\'c}}},
  \bibinfo{author}{\bibfnamefont{P.}~\bibnamefont{Rowlinson}},
  \bibnamefont{and} \bibinfo{author}{\bibfnamefont{S.~K.}
  \bibnamefont{Simi{\'c}}}, \bibinfo{journal}{Linear Algebra and its
  Applications} \textbf{\bibinfo{volume}{423}}, \bibinfo{pages}{155}
  (\bibinfo{year}{2007}).

\bibitem[{\citenamefont{Banerjee and Jost}(2007)}]{banerjee2007spectrum}
\bibinfo{author}{\bibfnamefont{A.}~\bibnamefont{Banerjee}} \bibnamefont{and}
  \bibinfo{author}{\bibfnamefont{J.}~\bibnamefont{Jost}},
  \bibinfo{journal}{arXiv preprint arXiv:0705.3772}  (\bibinfo{year}{2007}).

\bibitem[{\citenamefont{Wu}(2016)}]{wu2016graphs}
\bibinfo{author}{\bibfnamefont{C.~W.} \bibnamefont{Wu}},
  \bibinfo{journal}{Discrete Mathematics} \textbf{\bibinfo{volume}{339}},
  \bibinfo{pages}{1377} (\bibinfo{year}{2016}).

\bibitem[{\citenamefont{Hom and Johnson}(1991)}]{hom}
\bibinfo{author}{\bibfnamefont{R.~A.} \bibnamefont{Hom}} \bibnamefont{and}
  \bibinfo{author}{\bibfnamefont{C.~R.} \bibnamefont{Johnson}},
  \bibinfo{journal}{Cambridge UP, New York}  (\bibinfo{year}{1991}).

\bibitem[{\citenamefont{Horodecki et~al.}(2009)\citenamefont{Horodecki,
  Horodecki, Horodecki, and Horodecki}}]{horodecki2009quantum}
\bibinfo{author}{\bibfnamefont{R.}~\bibnamefont{Horodecki}},
  \bibinfo{author}{\bibfnamefont{P.}~\bibnamefont{Horodecki}},
  \bibinfo{author}{\bibfnamefont{M.}~\bibnamefont{Horodecki}},
  \bibnamefont{and}
  \bibinfo{author}{\bibfnamefont{K.}~\bibnamefont{Horodecki}},
  \bibinfo{journal}{Reviews of Modern Physics} \textbf{\bibinfo{volume}{81}},
  \bibinfo{pages}{865} (\bibinfo{year}{2009}).

\bibitem[{\citenamefont{G{\"u}hne and T{\'o}th}(2009)}]{guhne2009}
\bibinfo{author}{\bibfnamefont{O.}~\bibnamefont{G{\"u}hne}} \bibnamefont{and}
  \bibinfo{author}{\bibfnamefont{G.}~\bibnamefont{T{\'o}th}},
  \bibinfo{journal}{Physics Reports} \textbf{\bibinfo{volume}{474}},
  \bibinfo{pages}{1} (\bibinfo{year}{2009}).

\bibitem[{\citenamefont{Peres}(1996)}]{peres1996}
\bibinfo{author}{\bibfnamefont{A.}~\bibnamefont{Peres}},
  \bibinfo{journal}{Physical Review Letters} \textbf{\bibinfo{volume}{77}},
  \bibinfo{pages}{1413} (\bibinfo{year}{1996}).

\bibitem[{\citenamefont{Horodecki}(1997)}]{horodecki1997}
\bibinfo{author}{\bibfnamefont{P.}~\bibnamefont{Horodecki}},
  \bibinfo{journal}{arXiv preprint quant-ph/9703004}  (\bibinfo{year}{1997}).

\bibitem[{\citenamefont{McMahon}(2007)}]{mcmahon2007quantum}
\bibinfo{author}{\bibfnamefont{D.}~\bibnamefont{McMahon}},
  \emph{\bibinfo{title}{Quantum computing explained}} (\bibinfo{publisher}{John
  Wiley \& Sons}, \bibinfo{year}{2007}).

\bibitem[{\citenamefont{Garnerone et~al.}(2012)\citenamefont{Garnerone, Giorda,
  and Zanardi}}]{garnerone2012bipartite}
\bibinfo{author}{\bibfnamefont{S.}~\bibnamefont{Garnerone}},
  \bibinfo{author}{\bibfnamefont{P.}~\bibnamefont{Giorda}}, \bibnamefont{and}
  \bibinfo{author}{\bibfnamefont{P.}~\bibnamefont{Zanardi}},
  \bibinfo{journal}{New Journal of Physics} \textbf{\bibinfo{volume}{14}},
  \bibinfo{pages}{013011} (\bibinfo{year}{2012}).

\bibitem[{\citenamefont{Braunstein
  et~al.}(2006{\natexlab{b}})\citenamefont{Braunstein, Ghosh, Mansour,
  Severini, and Wilson}}]{braunstein2006some}
\bibinfo{author}{\bibfnamefont{S.~L.} \bibnamefont{Braunstein}},
  \bibinfo{author}{\bibfnamefont{S.}~\bibnamefont{Ghosh}},
  \bibinfo{author}{\bibfnamefont{T.}~\bibnamefont{Mansour}},
  \bibinfo{author}{\bibfnamefont{S.}~\bibnamefont{Severini}}, \bibnamefont{and}
  \bibinfo{author}{\bibfnamefont{R.~C.} \bibnamefont{Wilson}},
  \bibinfo{journal}{Physical Review A} \textbf{\bibinfo{volume}{73}},
  \bibinfo{pages}{012320} (\bibinfo{year}{2006}{\natexlab{b}}).

\bibitem[{\citenamefont{Rahiminia and Amini}(2008)}]{rahiminia2008separability}
\bibinfo{author}{\bibfnamefont{H.}~\bibnamefont{Rahiminia}} \bibnamefont{and}
  \bibinfo{author}{\bibfnamefont{M.}~\bibnamefont{Amini}},
  \bibinfo{journal}{Quantum Information \& Computation}
  \textbf{\bibinfo{volume}{8}}, \bibinfo{pages}{664} (\bibinfo{year}{2008}).

\bibitem[{\citenamefont{Hildebrand et~al.}(2008)\citenamefont{Hildebrand,
  Mancini, and Severini}}]{severini2008}
\bibinfo{author}{\bibfnamefont{R.}~\bibnamefont{Hildebrand}},
  \bibinfo{author}{\bibfnamefont{S.}~\bibnamefont{Mancini}}, \bibnamefont{and}
  \bibinfo{author}{\bibfnamefont{S.}~\bibnamefont{Severini}},
  \bibinfo{journal}{Mathematical Structures in Computer Science}
  \textbf{\bibinfo{volume}{18}}, \bibinfo{pages}{205} (\bibinfo{year}{2008}).

\bibitem[{\citenamefont{Wu}(2006)}]{wu2006}
\bibinfo{author}{\bibfnamefont{C.~W.} \bibnamefont{Wu}},
  \bibinfo{journal}{Physics Letters A} \textbf{\bibinfo{volume}{351}},
  \bibinfo{pages}{18} (\bibinfo{year}{2006}).

\bibitem[{\citenamefont{Wang and Wang}(2007)}]{wang2007tripartite}
\bibinfo{author}{\bibfnamefont{Z.}~\bibnamefont{Wang}} \bibnamefont{and}
  \bibinfo{author}{\bibfnamefont{Z.}~\bibnamefont{Wang}},
  \bibinfo{journal}{JOURNAL OF COMBINATORICS} \textbf{\bibinfo{volume}{14}},
  \bibinfo{pages}{R40} (\bibinfo{year}{2007}).

\bibitem[{\citenamefont{Xie et~al.}(2013)\citenamefont{Xie, Zhao, and
  Wang}}]{xie2013separability}
\bibinfo{author}{\bibfnamefont{C.}~\bibnamefont{Xie}},
  \bibinfo{author}{\bibfnamefont{H.}~\bibnamefont{Zhao}}, \bibnamefont{and}
  \bibinfo{author}{\bibfnamefont{Z.}~\bibnamefont{Wang}}, \bibinfo{journal}{the
  electronic journal of combinatorics} \textbf{\bibinfo{volume}{20}},
  \bibinfo{pages}{P21} (\bibinfo{year}{2013}).

\bibitem[{\citenamefont{Wu}(2009)}]{wu2009multipartite}
\bibinfo{author}{\bibfnamefont{C.~W.} \bibnamefont{Wu}}, \bibinfo{journal}{the
  electronic journal of combinatorics} \textbf{\bibinfo{volume}{16}},
  \bibinfo{pages}{R61} (\bibinfo{year}{2009}).

\bibitem[{\citenamefont{Wu}(2010)}]{wu2010graphs}
\bibinfo{author}{\bibfnamefont{C.~W.} \bibnamefont{Wu}},
  \bibinfo{journal}{Discrete Mathematics} \textbf{\bibinfo{volume}{310}},
  \bibinfo{pages}{2811} (\bibinfo{year}{2010}).

\bibitem[{\citenamefont{Hui and Jiao}(2013)}]{hui2013separability}
\bibinfo{author}{\bibfnamefont{Z.}~\bibnamefont{Hui}} \bibnamefont{and}
  \bibinfo{author}{\bibfnamefont{F.}~\bibnamefont{Jiao}},
  \bibinfo{journal}{Chinese Physics Letters} \textbf{\bibinfo{volume}{30}},
  \bibinfo{pages}{090303} (\bibinfo{year}{2013}).

\bibitem[{\citenamefont{Li et~al.}(2015)\citenamefont{Li, Chen, and
  Yang}}]{li2015quantum}
\bibinfo{author}{\bibfnamefont{J.-Q.} \bibnamefont{Li}},
  \bibinfo{author}{\bibfnamefont{X.-B.} \bibnamefont{Chen}}, \bibnamefont{and}
  \bibinfo{author}{\bibfnamefont{Y.-X.} \bibnamefont{Yang}},
  \bibinfo{journal}{Quantum Information Processing}
  \textbf{\bibinfo{volume}{14}}, \bibinfo{pages}{4691} (\bibinfo{year}{2015}).

\bibitem[{\citenamefont{Zanardi et~al.}(2004)\citenamefont{Zanardi, Lidar, and
  Lloyd}}]{zanardi}
\bibinfo{author}{\bibfnamefont{P.}~\bibnamefont{Zanardi}},
  \bibinfo{author}{\bibfnamefont{D.~A.} \bibnamefont{Lidar}}, \bibnamefont{and}
  \bibinfo{author}{\bibfnamefont{S.}~\bibnamefont{Lloyd}},
  \bibinfo{journal}{Physical Review Letters} \textbf{\bibinfo{volume}{92}},
  \bibinfo{pages}{060402} (\bibinfo{year}{2004}).

\end{thebibliography}

\end{document}